\newtheorem{theorem}{Theorem}[section]
\newtheorem{conjecture}{Conjecture}[section]
\newtheorem{prop}[theorem]{Proposition}
\newtheorem{lemma}[theorem]{Lemma}
\newtheorem{remark}[theorem]{Remark}
\newtheorem{definition}[theorem]{Definition}
\newtheorem{corollary}[theorem]{Corollary}
\newenvironment{proof}{\indent {\bf Proof.}}{\hfill$\bf  \Box$\bigskip}
\newcommand{\post}[2]{
\centering \leavevmode
 \includegraphics[width=#2cm]{#1}
 }
\begin{document}

\title{The Missing Piece Syndrome in Peer-to-Peer Communication\protect\thanksref{T1,T2}}

\title{The Missing Piece Syndrome in Peer-to-Peer Communication\thanks{This work was
supported in part by the National Science Foundation under grant NSF ECS 06-21416
and CCF 10-16959.  }}

 \author{
\authorblockN{Bruce Hajek and Ji Zhu  }\\
\authorblockA{Department of Electrical and Computer Engineering \\
 and the Coordinated Science Laboratory  \\
 University of Illinois at Urbana-Champaign}
}

\maketitle

\begin{abstract}
Typical protocols for peer-to-peer file sharing over the Internet divide
files to be shared into pieces.
New peers strive to obtain a complete collection of pieces from
other peers and from a seed.  In this paper we investigate a problem that can occur
if the seeding rate is not large enough.    The problem is that, even if the
statistics of the system are symmetric  in the pieces, there can be symmetry
breaking, with one piece becoming very rare.   If peers depart after obtaining
a complete collection, they can tend to leave before helping other peers
receive the rare piece.  Assuming that peers arrive with no pieces, there
is a single seed, random peer contacts are made, random useful pieces are
downloaded, and peers depart upon receiving the complete file, the system
is stable if the seeding rate (in pieces per time unit) is greater than the arrival rate, and is unstable if the
seeding rate is less than the arrival rate.  The result persists for any piece
selection policy that selects from among useful pieces, such as rarest first,
and it persists with the use of network coding.
\end{abstract}

\section{Introduction}

Peer-to-peer (P2P) communication in the Internet is provided through the
sharing of widely distributed resources
typically involving end users' computers acting as both clients and servers.  
In an unstructured peer-to-peer network, such as {\em BitTorrent}
 \cite{Cohen03}, a file is divided into many pieces. Seeds, which hold all pieces, 
distribute pieces to peers. New peers continually arrive into the network; they
simultaneously download pieces from a seed or other peers and  upload pieces to other peers. Peers
exit the system after they collect all pieces. 

% Some mechanism that are NOT in our model
Determining whether a given P2P network is stable can be difficult.  Roughly speaking, the aggregate
transfer capacity scales up in proportion to the number of peers in the network, but it has to be in the
right places.  Many P2P systems have performed well in practice, and they incorporate a variety of
mechanisms to help achieve stability.   A broad problem, which we address in part, is to provide
a better understanding of which mechanisms are the most effective under various network settings.
These mechanisms include
\begin{itemize}
\item   {\em Rarest first piece selection policies,} such as the one implemented in BitTorrent, whereby
peers determine which pieces are rarest among their neighbors and preferentially download such pieces.
\item {\em Tit-for-tat participation constraints,} such as the one implemented in BitTorrent, whereby peers are choked off
from receiving pieces from other peers  unless they upload pieces to those same peers.   This mechanism provides
an important incentive for peers to participate in uploading pieces, but it may also be beneficial in balancing
the distribution of pieces.
\item  {\em Peers dwelling in the network after completing download,} to provide extra upload capacity. 
\item {\em Network coding
% at the source nodes \cite{Luby02,Shokrollahi06} or  (I think we should skip coding by source nodes only.   If it is used with useful piece selection policies, it is very similar to the general piece selection policy.   If it is used with oblivious piece selection, results still hold if the peers delete duplicate pieces.   --BH)
 \cite{AhlswedeCaiLiYeung,GkantsidisRodriguez05},}
whereby data pieces are combined to form coded pieces, giving peers numerous ways to collect enough
information to recover the original data file.
\end{itemize}

This paper determines what parameter values yield stability for a simple model of a P2P file sharing network.
The main model does not include the enhancements mentioned in the previous paragraph, but extensions and discussion
regarding the above mechanisms is given.   The model includes a fixed seed in the network
that uploads with a constant rate.    New peers arrive according to a Poisson process, and have no pieces at the time
of arrival.  {\em Random peer contact} is assumed; each peer contacts a randomly selected target peer periodically. 
{\em Random useful piece selection} is also assumed; each peer chooses which piece to download uniformly at
random from the set of pieces that its selected target has and it itself does not have.  As in the BitTorrent system,
we assume that new peers arrive with no pieces; in effect a peer must first obtain a piece from another peer
or the fixed seed before it can begin uploading to other peers. We also assume that peers depart
as soon as they have completed their collection.

%Our model vs. Massoulie Vojnovic 

In a P2P network, the last few pieces to be downloaded by a peer are often rare in the network, so it usually takes the
peer a long time to finish downloading. This phenomenon has been referred to as the \textit{delay in  endgame mode} \cite{Cohen03} (or \textit{last piece problem}). We refer to the specific situation that there are many peers
in the network and most of them are missing only one piece which is the same for all peers, as the
{\em missing piece syndrome}. In that situation, peers lucky enough to get the missing piece usually depart immediately
after getting the piece,  so their ability to spread the missing piece is limited.

The main result in this paper is to show, as suggested by the missing piece syndrome, that
the bottleneck for stability is the upload capacity of the seed.    Specifically, if the arrival rate of new peers
is greater than the seed upload rate, the number of peers in the system converges to infinity almost surely; if the
arrival rate of new peers is less than the seed upload rate, the system is positive recurrent and the mean number
of peers in the system in equilibrium is finite.   The next section gives the precise problem
formulation, simulation results illustrating the missing piece syndrome, and the
main proposition.   The proposition is proved in
Sections \ref{sec:instability} and \ref{sec:stability}, with the help of some lemmas given in the appendix.
Section \ref{sec:extensions} provides extensions of the result, including consideration of the enhancement
mechanisms mentioned above.   In particular, it is shown that the region of network stability is not
increased if rarest first piece selection policies, or network coding policies, are applied.   Section \ref{sec:extensions}
also provides a conjecture regarding a refinement of the main proposition for the borderline
case when the arrival rate is equal to the seeding rate; it is suggested that whether the system
is stable then depends on the rate that peers contact each other.

%Talk about the model, underlying assumptions, practical interests, how the model differ from those previous work.
%Talk about missing piece syndrome
%Highlight the contribution, why the result is suprising
%Talk about related work.

The model in this paper is similar to the flat case of the open system of   Massouli\'{e} and Vojnovi\'{c} 
 \cite{MassoulieVojnovic05,MassoulieVojnovic08}.   The model in \cite{MassoulieVojnovic05,MassoulieVojnovic08}
is slightly different in that, rather than having a fixed seed, it assumes that new peers each arrive with a randomly
selected piece.   A  fluid model, based on the theory of density-dependent jump Markov processes (see \cite{Kurtz81}),
is  derived and studied in  \cite{MassoulieVojnovic05,MassoulieVojnovic08}.  It is shown that  there is a finite
resting point of the fluid ordinary differential equation.   The analysis in this paper is different and complementary.
Rather than appealing to fluid limits, we focus on direct stochastic analysis methods, namely using
coupling to prove transience for some parameter values and the Foster-Lyapunov stability criterion to prove
positive recurrence for complementary parameter values.  Furthermore, our work shows the importance
of considering asymmetric sample paths even for symmetric system dynamics.  Forthcoming work described in
 \cite{ZhuHajek11} provides analysis of P2P networks with  peers having pieces upon arrival, as in
 \cite{MassoulieVojnovic05,MassoulieVojnovic08}, and with peers remaining for some time
in the system after obtaining a complete collection.

Some other works related to stability and the missing piece syndrome are the following.
The instability phenomenon identified in this paper was discovered independently by
Norros et al.  \cite{NorrosReittuEirola09}.     Norros et al.  \cite{NorrosReittuEirola09} proved a version
of our main proposition  for a similar model, for the case of two pieces.  In the model of  \cite{NorrosReittuEirola09} a peer
receives one piece on arrival, with the distribution of the piece number (either one or two)
being determined by sampling uniformly from the group consisting of a fixed seed and
the population of peers already in the system.

Menasch{\'e} et al \cite{Menasche_etal10}
pointed out that  in their simulation studies,  their ``smooth download assumption" and ``swarm sustainability"
break down if the seed upload rate is not sufficiently large.  Leskel\"{a} et al. \cite{LeskelaRobertSimatos10}
investigate stability conditions for a single piece file, or a two piece file when the pieces are obtained
sequentially,  when peers remain in the system for some time after obtaining the piece.
The earliest papers to analytically study unstructured peer-to-peer files systems with arrivals
of new peers are  \cite{QiuSrikant04,YangDeVeciana04}.   These papers provide simple models
in which a two dimensional differential equation is used that does not take into account
the stages of service as peers gain more pieces.

\section{Model formulation and simulations}   \label{sec:formulation}

The model in this paper is a composite of models in \cite{MassoulieVojnovic05,MassoulieVojnovic08,YangDeVeciana06}.
It incorporates Poisson arrivals, fixed seed, random uniform contacts, and random useful piece selection,
as follows.   The parameters of the model are an integer $K\geq 1$ and
strictly positive constants $\lambda, \mu,$ and $U_s.$
\begin{itemize}
\item  There are $K$ pieces and  $\mathcal{F}=\{1, \ldots , K\},$  so that $\mathcal{F}$ indexes all the pieces.
\item The set of proper subsets of  $\mathcal{F}$ is denoted by  $\cal C.$ 
\item A peer with set of pieces $c,$  for some $c\in {\cal C},$ is called a {\em type $c$ peer}.
\item A type $c$ peer becomes a type $c \cup \{i\}$ peer if it downloads piece $i$ for some $i\not\in c.$ 
\item A Markov state is $ \mathbf{x}=(x_c :  c \in {\cal C}  ),$ with $x_c$ denoting the number of type $c$ peers, 
$|\mathbf{x}|$ denoting the number of peers in the system, and $\mathcal{S} = \mathbb{Z}_+^{{\cal C}}$ denoting
the state space of the system.
\item Peers arrive exogenously one at a time with no pieces; the times of arrival form a rate $\lambda$  Poisson process.
\item  Each peer contacts other peers, chosen uniformly at random from among all peers,  for opportunities to
download a piece (i.e. pull) from the other peers,  according to a Poisson process of rate $\mu>0.$   Mathematically,
an equivalent assumption is the following.   Each peer contacts other peers, chosen uniformly at random from among all peers, 
for opportunities to upload a piece (i.e. push) to the other peers,  according to a Poisson process of rate $\mu>0.$ 
\item Downloads are modeled as being instantaneous. This assumption is reasonable in the context of the previous assumption.
\item  Random useful piece selection is used, meaning that when a peer of type $c$  has an opportunity to download a
piece from a peer of type $s,$  the opportunity results in no change of state if $s \subset c.$ 
Otherwise, the type $c$ peer downloads one piece selected at random from $s-c$, with all $|s-c|$ possibilities having equal probability.
\item  There is one fixed seed, which at each time in a sequence of times forming a Poisson process of rate $U_s,$
selects a peer at random and uploads a random useful piece to the selected peer.
\item  Peers leave immediately after obtaining a complete collection.
%\item Peers leave after waiting an Exp($\gamma$) distributed time after obtaining a complete collection
\end{itemize}

Given a state $\mathbf{x}$, let $T_0(\mathbf{x})$ denote the new state resulting from the arrival of a new peer.
Given $c\in {\cal C},$ $1\leq i \leq K$ such that  $i\notin c,$ and a state  $\mathbf{x}$
 such that $x_c\geq 1,$  let $ T_{c,i}(\mathbf{x} ) $ denote the new state resulting from a type $c$ peer downloading piece $i.$
The positive entries of the generator matrix $Q= (q(\mathbf{x},\mathbf{x'}): \mathbf{x},\mathbf{x'}\in  {\cal S} )$ of the
Markov process are given by:
\begin{eqnarray*}
q(\mathbf{x} , T_0(\mathbf{x}) ) & = &  \lambda   \\
q( \mathbf{x},  T_{c,i}(\mathbf{x}) ) &= &  \frac{x_c}{|\mathbf{x}|}  \left(\frac{U_s}{K-|c|}+ \mu \sum_{s: i\in s} 
\frac{x_s}{|s-c|}    \right)  \\&&~~~~~~~~~~~~~~~~~\mbox{if}~x_c>0~\mbox{and}~i\notin c.
\end{eqnarray*} 

To provide some intuition, we present some simulation results.
Figure \ref{fig.peers_sim} shows simulations of the
system for  $U_s=\mu=1$ and $K=40$ pieces.
\begin{figure}[htb]
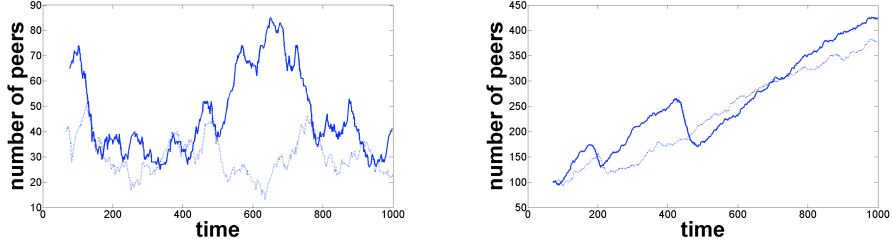
  
\post{peers1}{6}
\post{peers2}{6}
\caption{Number of peers vs. time.  The first plot is for $\lambda=0.6$ (dashed) and $\lambda=0.8$ (solid),
and the second is for $\lambda=1.2$ (dashed) and $\lambda=1.4$ (solid).  \label{fig.peers_sim}
}
\end{figure}
The first plot shows apparently stable behavior.   After an initial spike, the number
of peers in the system seems to hover around 30 (for $\lambda=0.6$) or 45 (for $\lambda=0.8$), which by Little's law
is consistent with a mean time in system around 50 to 60 time units (or about 25\% to 50\% larger than the
sum of the download times).   However, the second plot shows that for $\lambda=1.2$ or $\lambda=1.4,$
the number of peers in the system does not appear to stabilize, but rather to grow linearly.   The explanation
for this instability is indicated in Figure \ref{fig.pieces_sim}, which shows the time-averaged number of peers that held each
given piece during the simulations, for $\lambda=0.6$ in the first plot and for $\lambda=1.4$ in the second plot.
The first plot shows that the 40 pieces had nearly equal presence in the peers, with piece 7
being the least represented.  The second plot shows that 39 pieces had nearly equal presence and most
of the peers had these pieces most of the time, but only a small number of peers held piece 3.
\begin{figure}[htb]
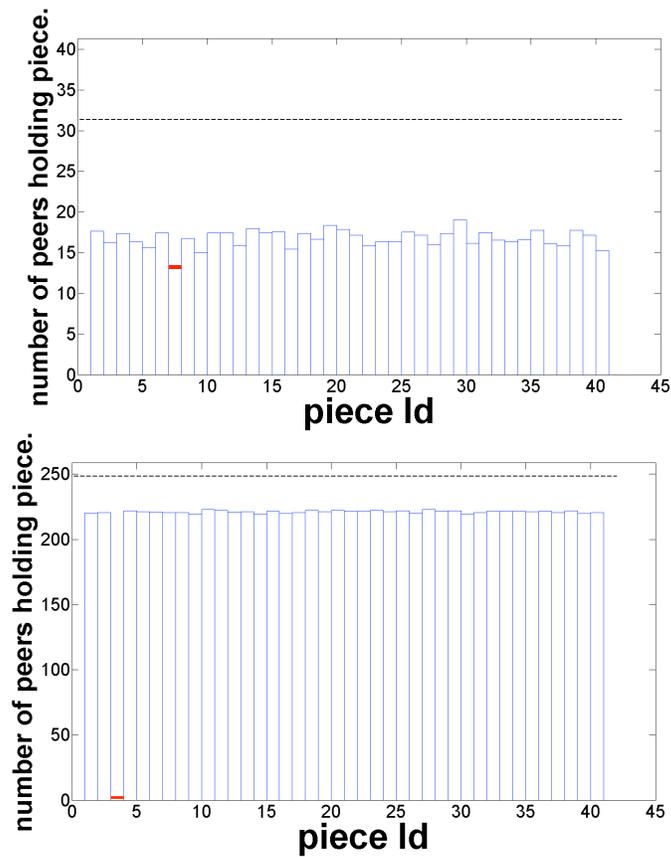
  
 \post{pieces1}{10}
 \post{pieces2}{10}
\caption{Average number of peers holding each piece for the duration of the simulations. The first plot is
for $\lambda=0.6$ and the second is for  $\lambda=1.4.$  The dashed lines indicate time-average
number of peers in system. \label{fig.pieces_sim}
 }
\end{figure}
The following proposition, which is the main result of this paper, confirms that the intuition behind
the simulation results is correct.
\begin{prop}    \label{prop.seed_only}   
 (i)  If  $\lambda > U_s$  then the Markov process
 is transient, and the number of peers in the system converges to infinity with probability one.
 (ii)   If  $\lambda < U_s$  the Markov process with generator $Q$ is positive recurrent, and the equilibrium distribution $\pi$ is
such that $\sum_{ \mathbf{x}}  \pi(\mathbf{x})  |\mathbf{x}| < \infty.$ 
\end{prop}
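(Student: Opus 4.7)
The plan is to treat parts (i) and (ii) with complementary stochastic techniques. Part (i) calls for a coupling argument combined with a drift analysis of a statistic tailored to a distinguished piece, while part (ii) fits the Foster--Lyapunov framework with the natural Lyapunov function $V(\mathbf{x}) = |\mathbf{x}|$.

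For part (i), I would single out piece $1$ and set $W(\mathbf{x}) = \sum_{c:\, 1 \notin c} x_c$ (peers missing piece $1$) and $M_1(\mathbf{x}) = \sum_{s:\, 1 \in s} x_s = |\mathbf{x}| - W$ (peers holding piece $1$). Computing with the generator, $QW = \lambda - R_1$, where $R_1$ is the aggregate rate at which peers acquire piece $1$; separating the seed and peer-to-peer contributions gives the bound $R_1 \leq (W/|\mathbf{x}|)(U_s + \mu M_1)$, hence $QW \geq \lambda - U_s - \mu M_1$. If $M_1$ were kept below the threshold $(\lambda - U_s)/\mu$, the drift of $W$ would be uniformly positive and $W$ would diverge. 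Since nothing a priori bounds $M_1$, however, $W$ alone cannot serve as a Lyapunov witness; my plan is therefore to couple the original process with an auxiliary one in which peer-to-peer uploads of piece $1$ are suppressed, so that piece $1$ is supplied only by the seed at aggregate rate at most $U_s < \lambda$. A sample-path comparison (sharing arrivals and seed actions across the two processes) should show that on an event of positive probability $M_1$ stays small and $|\mathbf{x}|$ diverges, and a regeneration or zero-one argument would then upgrade positive-probability divergence to almost sure divergence. The chief obstacle is the construction and correctness of this coupling, since the peer-to-peer dynamics in the original process are entangled with the state of piece $1$.

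For part (ii), I would apply Foster--Lyapunov with $V(\mathbf{x}) = |\mathbf{x}|$. Writing $y_j = x_{\mathcal{F}\setminus\{j\}}$ for the one-club of piece $j$ and $M_j = \sum_{s:\, j \in s} x_s$, the drift is $QV = \lambda - \Lambda$ with $\Lambda = \sum_j y_j(U_s + \mu M_j)/|\mathbf{x}|$, the total departure rate. It suffices to show $\Lambda \geq \lambda + \epsilon$ for all $\mathbf{x}$ outside a finite set. Informally two regimes need handling. In roughly balanced states, each $M_j$ is of order $|\mathbf{x}|$ and the peer-to-peer term $\mu y_j M_j/|\mathbf{x}|$ grows with $|\mathbf{x}|$ and easily dominates $\lambda$. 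In asymmetric states where some piece $j^*$ is rare, most peers missing $j^*$ should in fact be in its one-club, so $y_{j^*}$ is of order $|\mathbf{x}|$ and the seed contribution $U_s y_{j^*}/|\mathbf{x}|$ approaches $U_s > \lambda$. The main obstacle is turning this dichotomy into a uniform quantitative bound; pigeonhole-style appendix lemmas ruling out pathological states with many peers each still missing several pieces but few one-clubs are needed. Once $QV \leq -\epsilon$ outside a finite set is in hand, standard Foster--Lyapunov theory delivers both positive recurrence and $\sum_{\mathbf{x}} \pi(\mathbf{x}) |\mathbf{x}| < \infty$.
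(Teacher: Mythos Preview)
Your proposal for part (ii) has a genuine gap: the Lyapunov function $V(\mathbf{x})=|\mathbf{x}|$ cannot work. The drift $QV(\mathbf{x})=\lambda-\Lambda(\mathbf{x})$ depends only on the departure rate $\Lambda$, and peers depart only upon receiving their final piece, so $\Lambda$ involves only the one-club counts $y_j$. But there are infinitely many states with $y_j=0$ for every $j$; for instance, take $n$ peers all holding the same set of $K-2$ pieces. For every such state $\Lambda(\mathbf{x})=0$ and $QV(\mathbf{x})=\lambda>0$, so the set $\{QV>-\epsilon\}$ is infinite and the Foster--Lyapunov criterion fails outright. Your proposed dichotomy does not cover these states: piece $1$ may be rare (held by nobody) while the one-club of piece $1$ is simultaneously empty, because all peers are also missing another piece. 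These states cannot be ``ruled out'' by a side lemma; they are legitimate states at which the drift of $|\mathbf{x}|$ is strictly positive. The paper resolves this by tracking not departures but \emph{progress}: with $n_i$ the number of peers holding exactly $i$ pieces, it takes $V(\mathbf{x})=\sum_{i=0}^{K-1} b_i (n_0+\cdots+n_i)^2/2$ for carefully chosen weights $b_i$ satisfying $U_s b_i>\lambda(b_i+\cdots+b_{K-1})$. The partial sums $n_0+\cdots+n_i$ decrease whenever any peer with at most $i$ pieces downloads anything, so the drift is negative even in states with no one-club peers. The two-case analysis you sketched (one $n_i$ dominates versus mass spread over several levels) is then carried out for this $V$, where in the first case the seed alone drives the drift and in the second the quadratic term makes peer-to-peer transfers dominate.

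For part (i), your instinct to track $W=\#\{\text{peers missing piece }1\}$ and bound $M_1$ is exactly right, and matches the paper's strategy. But the coupling you propose, in which peer-to-peer uploads of piece $1$ are simply suppressed, does not by itself control $M_1$ in the \emph{original} process, which is what you need. The paper instead analyzes the original dynamics directly (after a harmless modification of rates on states already outside the target event): it observes that peers holding piece $1$ (``infected'' young peers) arise from the seed at rate at most $\xi U_s$ and then spawn further infected peers at rate at most $\xi\mu$ each, forming a subcritical branching process. Bounding the total number of piece-$1$ uploads by infected peers via the busy-period statistics of an $M/GI/1$ queue, and the number of young peers via an $M/GI/\infty$ comparison, yields the four maximal inequalities needed to show $W_t\to\infty$ with positive probability from a large one-club initial state. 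Your plan is missing this branching-process step, which is where the real work lies.
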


In the remainder of this section, we give an intuitive explanation for the proposition, which also
guides the proof.
We first give an intuitive justification of Proposition \ref{prop.seed_only}(i), so assume
 $\lambda > U_s.$   Under this condition,  eventually, due to random fluctuations, there  will be
many peers in the system that are all missing the same piece.  While any of the $K$ pieces
could be the missing one, to be definite we focus on the case that the peers are missing
piece one.   A peer is said to be in  the {\em one club}, or to be a one-club peer,  if it has all pieces
except piece one.   We consider the system starting from an initial state in which there are many
peers in the system, and all of them are in the one club. The system then evolves as shown in
Figure \ref{fig.oneclub}.
\begin{figure}[htb]
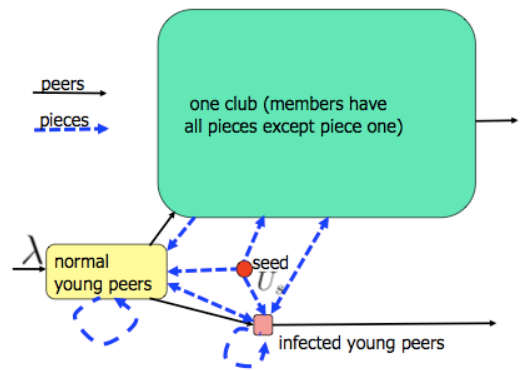
 
\post{oneclub}{7}
\caption{Flows of peers and pieces. Solid lines indicate flows of peers; dashed lines indicate flows of pieces.  \label{fig.oneclub}  }
\end{figure}
The large size of the box showing the one club indicates that most peers are one club peers.
A peer not in the one club is said to be a {\em young peer}, and a young peer is said to be
{\em normal} if it does not have piece one and {\em infected} if it does have piece one.
Since there are so many one club peers to download from,  a peer doesn't stay young very long;
most of the young peers join the one club soon after arrival.      However, due to the fixed seed uploading
pieces,  some of the normal young peers become infected peers.   Those infected peers can infect
yet more young peers, thereby forming a branching process.   But typically the infected young peers do not infect
other young peers, so that the branching process is highly subcritical.  Therefore, the rate of departures
from the one club due to uploads of piece one from infected peers is small.  Therefore, most peers
eventually enter the one club, and the main way that peers leave the one club is to receive piece one
directly from the fixed seed.   So the long term arrival rate at the one club is close to $\lambda$ and the
departure rate from the one club is close to $U_s.$   Therefore, the one club can grow at rate
close to  $\lambda  - U_s,$  while the number of young peers will stay about constant.
These ideas are made precise in the proof.

 To understand why the system is stable for  $\lambda < U_s,$
the rough idea is to show that whenever there are many peers in the system, no matter
what the distribution of pieces they hold, the system moves towards emptying out.  If there are
many peers in the system, one of the following two cases holds.  The first case is that most of the peers
have the same number, say $k_o$, of pieces.   Intuitively, the worst case would be for all peers
with $k_o$ pieces to have identical collections of pieces, in which case no peer with $k_o$ pieces
would be useful to another.   However, if $\lambda < U_s,$  such a state can't persist, because peers
with $k_o$ pieces get additional pieces from the fixed seed at an aggregate rate near $U_s,$  while the
long term rate that new peers with exactly $k_o$ pieces can appear is less than or equal to $\lambda.$   
If the system is not in the first case just described,  then there are at least two sizeable groups of peers, so that all the
peers in the first group have one number of pieces and all peers in the second group have some
larger number of pieces.  Then all peers in the second group can be helpful to any peer in the
first group, so that there will be a large rate of downloads.   Thus, if there are many
peers in the system, no distribution of the pieces they hold can persist.   To prove stability,
it is still necessary to show that the state can't spiral out to ever increasing loads through some
quasi-periodic behavior. This is achieved through the use of a potential function and
 the Foster-Lyapunov stability criterion.

\section{Proof of  instability  if  $\lambda > U_s$}  \label{sec:instability}

Proposition \ref{prop.seed_only}(i)  is proved in this section; it can be read independently of the proof
of  Proposition \ref{prop.seed_only}(ii) in the next section.   The proof follows along the lines
of the intuitive explanation given just after the statement of the proposition in Section \ref{sec:formulation},
and an additional explanation of the proof is provided in a remark at the end of the section.
Assume $\lambda > U_s.$ 
If $K=1,$ the system reduces to an $M/M/1$ queueing system with arrival rate $\lambda$ and
departure rate $U_s$, in which case the number of peers in the system converges to infinity with probability one.
So for the remainder of this proof assume $K\geq 2.$   To begin:
\begin{itemize}
\item Select $\epsilon > 0$ so that $3\epsilon < \lambda -U_s.$ 
\item Select $\xi>0$ so that  $\epsilon - 4K \xi U_s > 0,$
and
\begin{equation} \label{eq.rho}
\rho < \frac{1}{2}~~\mbox{where}~~ \rho = 2 \xi  (K-1).
\end{equation}
  It follows from \eqref{eq.rho} that $\xi < 0.5.$
\item Select $\epsilon_o$ small enough that
$
\frac{\epsilon_o}{\lambda-U_s-3\epsilon} < \xi.
$
\item Select $B$ large enough that
\begin{eqnarray} 
\frac{e^{ \lambda[2(K-1)/\mu + 1]}2^{-B}}{1-2^{-\epsilon_o}}\leq 0.1,  \label{eq.BMMinfty}  \\
\frac{64K^2 \xi U_s}{2B(\epsilon-4K\xi U_s )} \leq 0.1,    \label{eq.BMM1}  \\
\frac{\lambda}{2B\epsilon} \leq 0.1,~~\mbox{and}~~\frac{U_s}{2B\epsilon} \leq 0.1. \label{eq.BPoi}
\end{eqnarray}
\item  Select $N_o$ large enough that  $\frac{B}{N_o-3B}\leq \xi$.
\end{itemize}

We shall use the notions of one club, young peer, and infected young peer, as
described in the paragraph after Proposition \ref{prop.seed_only}.
For a given time $t\geq 0,$  define the following random variables:
\begin{itemize}
\item $A_t:$ cumulative number of arrivals, up to time $t$
\item $N_t:$ number of peers at time $t$
\item $Y_t:$ number of young peers at time $t$
\item $D_t:$ cumulative number of uploads of piece one by infected peers, up to time $t$
\item $Z_t:$  cumulative number of uploads of piece one by the fixed seed, up to time $t$
\end{itemize}

The system is modeled by an irreducible, countable-state Markov process. A property of such random processes is
that either all states are transient, or no state is transient.  Therefore, to prove Proposition \ref{prop.seed_only}(i),
it is sufficient to prove that some particular state is transient.    With that in mind, we assume that the initial state is the one with
$N_o$ peers, and all of them are one-club peers.   Let $\tau$ be the extended stopping time defined by
 $\tau=\min\{t \geq 0 : Y_t  \geq \xi N_t  \},$ with the usual convention that $\tau=\infty$ if   $Y_t  < \xi N_t$ for all $t.$
 It suffices to prove that
\begin{equation}  \label{eq:target}
P\{ \tau = \infty ~~\mbox{and} \lim_{t\rightarrow\infty} N_t=+\infty \}  \geq 0.6.
\end{equation}

The equation \eqref{eq:target} depends on the transition rates of the system out of states such that $Y< \xi N .$    Thus, we can and will
prove \eqref{eq:target} instead for an alternative system, that has the same initial state, and the same
out-going transition rates for all states such that  $Y <   \xi   N,$   as the original system.   
The alternative system is defined by modifying the original system by letting the rate of downloads from the set of one-club peers
by each young peer be  $\mu \max\{\frac{N-Y}{N},{1 \over 2}\}$, and the aggregate rate of downloads from the fixed seed to the
set of young peers be $U_s \min\{\frac{Y}{N},  \xi  \}.$     Note that the rates used for this definition
are equal to the original ones on the states such that
$Y< \xi N,$ as required.    The alternative system has the following two properties:
\begin{enumerate}
\item Each young peer receives opportunities to download from one-club peers at rate greater than or equal to  $\mu/2.$
\item The fixed seed contacts the entire population of young peers at aggregate rate less than or equal to $\xi U_s.$
\end{enumerate}
{\em For the remainder of this proof we consider the alternative
system, but for brevity of notation, use the same notation for it as for the original system, and refer to it as the original system.}

The following four inequalities will be established, for $\epsilon, \xi, \epsilon_o, B,$ and $N_o$ satisfying the
conditions given near the beginning of the section.
\begin{eqnarray}
P\{ A_t >   -B+  (\lambda  -  \epsilon) t ~~\mbox{for all}~ t\geq 0 \}    \geq 0.9   \label{eq.Aineq}   \\
P\{ Z_t  <  B  +  (U_s   +  \epsilon) t  ~~ \mbox{for all}~ t\geq 0 \}    \geq 0.9   \label{eq.Zineq}   \\
P\{ Y_t  <   B  +  \epsilon_o t ~~ \mbox{for all}~ t\geq 0 \}    \geq 0.9   \label{eq.Yineq}   \\
P\{ D_t  <  B  +  \epsilon t  ~~\mbox{for all}~ t\geq 0 \}    \geq 0.9  \label{eq.Dineq} 
\end{eqnarray}

Let $\cal E$ be the intersection of the four events on the left sides of  \eqref{eq.Aineq}-\eqref{eq.Dineq}.  Since $N_t$ is greater than
or equal to the number of peers in the system that don't have piece one,
on $\cal E$, \\ $N_t \geq N_o+A_t -D_t-Z_t >  N_o - 3B + (\lambda-U_s-3\epsilon) t $ for all $t\geq 0.$    Therefore, on $\cal E$, for
any $t\geq 0,$
\begin{eqnarray*}
\frac{Y_t}{N_t}  &  <    & \frac{ B  +  \epsilon_o t  }{ N_o  - 3B + (\lambda-U_s-3\epsilon) t }  \\
&\leq & \max \left\{  
\frac{ B   }{ N_o -3B} , ~
 \frac{  \epsilon_o   }{\lambda-U_s-3\epsilon} 
\right\}  \leq \xi.
\end{eqnarray*}
Thus,  $\cal E$ is a subset of the event in \eqref{eq:target}.   Therefore, if  \eqref{eq.Aineq}-\eqref{eq.Dineq} hold,
$P\{{\cal E} \} \geq 0.6,$  and \eqref{eq:target} is implied.   So to complete the proof, it remains to prove
 \eqref{eq.Aineq}-\eqref{eq.Dineq}.
 
The process $A$ is a Poisson process with rate $\lambda$, and $Z$ is stochastically dominated by a Poisson process
with rate $U_s$.   Thus, both \eqref{eq.Aineq} and \eqref{eq.Zineq}  follow from Kingman's moment bound
(see Lemma \ref{lemma:Kingman} in the appendix) and the conditions in \eqref{eq.BPoi} on $B.$

Turning next to the proof of \eqref{eq.Yineq}, we shall use the following observation about {\em stochastic
domination} (the notion of stochastic domination is reviewed in the appendix).  The observation
is a mathematical version of the statement that the number of young peers remains roughly bounded because
peers don't stay young for long.

\begin{lemma}   \label{lemma.MG1compare}
The process $Y$ is stochastically dominated by the number of customers in an $M/GI/\infty$ queueing system
with initial state zero,  arrival rate $\lambda,$ and service times having the Gamma distribution with parameters  $K-1$ and
$\mu/2.$
\end{lemma}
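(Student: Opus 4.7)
The plan is to construct a pathwise coupling under which, at every time $t$, each young peer in the (alternative) real system corresponds to a customer still in service in an $M/GI/\infty$ queue with the prescribed parameters. Both systems are driven by the same rate $\lambda$ Poisson arrival process, so the $n$-th peer and the $n$-th $M/GI/\infty$ customer arrive at a common time $T_n$, and both systems start empty (no young peers, no customers).

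The key structural observation is that each young peer leaves the young state after at most $K-1$ contacts with one-club peers. A one-club peer holds exactly the set $\{2,\dots,K\}$, and any young peer is missing at least one piece of that set (else it would already be a one-club peer or would have departed). Hence, by random useful piece selection, each such contact delivers exactly one previously-unheld piece of $\{2,\dots,K\}$; after $K-1$ such contacts the peer holds all of $\{2,\dots,K\}$ and either joins the one-club or departs. Since, by property 1 of the alternative system, the instantaneous rate at which a young peer contacts one-club peers is at least $\mu/2$, the sojourn time in the young state should be stochastically dominated by the sum of $K-1$ i.i.d.\ Exp$(\mu/2)$ variables, i.e.\ by a Gamma$(K-1,\mu/2)$ service time.

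To make this precise I would use the standard thinning coupling: attach to each peer $n$ an independent unit-intensity Poisson point process $\Xi_n$ on $[0,\infty)\times[0,\infty)$, and generate the actual one-club contact times of peer $n$ by retaining the points $(t,u)\in\Xi_n$ with $u\le r_n(t)$, where $r_n(t)\ge \mu/2$ is the peer's current one-club contact rate. Simultaneously define a reference rate $\mu/2$ Poisson process by retaining the points of $\Xi_n$ with $u\le\mu/2$; since $\mu/2\le r_n(t)$, this sub-process is contained in the actual one-club contact process. Declare the service time $\tilde\sigma_n$ of the $n$-th $M/GI/\infty$ customer to be the time from $T_n$ to the $(K-1)$-th reference point, which is Gamma$(K-1,\mu/2)$ distributed and independent across $n$. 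By the structural observation, the real peer has ceased to be young by time $T_n+\tilde\sigma_n$, so $Y_t\le \tilde Y_t$ pathwise, which gives the desired stochastic domination. The main delicate point is the formal justification of the thinning construction: one must verify that $r_n(t)$ is appropriately measurable with respect to the natural filtration so that the retained process produces the correct alternative-system dynamics, and that the reference processes really are independent Poisson$(\mu/2)$ streams; both are routine once the probability space is enriched with the $\Xi_n$'s independently of the arrival stream.
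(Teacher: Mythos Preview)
Your proof is correct and follows essentially the same route as the paper's: both use the common arrival stream, note that at most $K-1$ useful downloads from one-club peers end a peer's young phase, and couple the one-club contact stream (rate $\ge\mu/2$) with a slower rate-$\mu/2$ ``internal clock'' whose $(K-1)$-st tick defines the Gamma$(K-1,\mu/2)$ service time. Your Poisson point process thinning is simply a more explicit rendering of what the paper calls the internal Poisson clock.
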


\begin{proof}
The idea of the proof is to show how, with a possible enlargement of the
underlying probability space, an $M/GI/\infty$  system can be constructed on the same probability
space as the original system, so that for any time $t$, $Y_t$ is
less than or equal to the number of peers in the $M/GI/\infty$ system.   Let the $M/GI/\infty$ system have
the same arrival process as the original system--it is a Poisson process of rate $\lambda.$
For any young peer, the intensity of downloads from the one club (i.e. from any peer in the one club) is
always greater than or equal to $\mu/2$ for the original system,  where we use the fact  $1-\xi > 1/2,$
which is true by \eqref{eq.rho} and the assumption $K\geq 2.$   We can thus suppose that
each young peer has an  internal Poisson clock, which generates ticks at rate $\mu/2$, and is such that whenever the internal clock of
a young peer ticks, that young peer downloads a piece from the one club.    We declare that a peer remains
in the $M/GI/\infty$ system until its internal clock ticks $K-1$ times.  This gives the correct service
time distribution, and the service times of different peers in the $M/GI/\infty$ system are independent, as required.
A young peer can possibly leave the original system sooner than it leaves the $M/GI/\infty$ system,
because a young peer in the original system can possibly download pieces at times when its internal clock doesn't tick. 
But if a young peer is still in the original system, it is in the $M/GI/\infty$ system.
\end{proof}

Given this lemma,  \eqref{eq.Yineq}  follows from Lemma  \ref{lemma.mginfty} with $m$ in the lemma equal to
$2(K-1)/\mu$, and $\epsilon$ in the lemma equal to $\epsilon_o,$  and \eqref{eq.BMMinfty}.
It remains to prove \eqref{eq.Dineq}.\\

Consider the following construction of a stochastic system that is similar to the original one, with random variables that
have similar interpretations, but with different joint distributions.   We call it the  {\em comparison system.}
It focuses on the infected peers and the uploads by infected peers, and it is specified in Table \ref{table.compare}. 

\begin{table}[hbt] \caption{Specification of comparison system}  \label{table.compare}
%  For double column use the following line
%   \begin{tabular}{|p{1.5in}|p{1.4in}|} \hline
% For single column use the following line
 \begin{tabular}{|p{2.2in}|p{2.3in}|} \hline
~~~~~~~~~~~Original system & ~~~~~~~~~~~Comparison system \\ \hline
The fixed seed creates infected peers at a rate less than $\xi U_s.$
&
The fixed seed creates infected peers at rate  $\xi U_s$.  \\  \hline
An infected peer creates new infected peers at a rate
less than $\xi \mu.$   
&
An infected peer creates new infected peers at rate $\xi \mu.$  \\  \hline
An infected peer uploads piece one to
one-club peers at a rate less than or equal to $\mu.$   
&
An infected peer uploads piece one to
one-club peers at rate $\mu.$  \\   \hline
Just after a peer becomes infected, it requires at most $K-1$
additional pieces, and the rate for acquiring those pieces is
greater than or equal to $ \mu/2.$
&
After a new infected peer arrives, it must download $K-1$
additional pieces, and the rate for acquiring those pieces is  $ \mu/2.$ \\ \hline
\end{tabular}
\end{table}

It should be clear to the reader that both the original system and the comparison system can be
constructed on the same underlying probability space such that any infected peer in the original system
at a given time is also in the
comparison system.    When such a peer becomes infected in the original system,
we require that it also arrives to the comparison system, it discards all pieces it may have
downloaded before becoming infected,  and it subsequently ignores all opportunities to download except
those occurring at the times its internal clock (described in the proof of Lemma \ref{lemma.MG1compare}) ticks.
Because infected young peers possibly stay longer in the comparison system than in the original system, some of
the peers in the comparison system correspond to peers that already departed from the original system.   There can also be
some infected peers in the comparison system that never existed in the original system because the arrival rate of
infected peers to the comparison system is greater than the arrival rate for the original system.
But whenever there is an infected peer in the original system, that peer is also in the comparison system,
and the following property holds.  Whenever any one of the following events happens in the original system, it
also happens in the comparison system:
\begin{itemize}
\item  The fixed seed creates an infected peer.
\item An infected peer creates an infected peer
\item  An infected peer uploads piece one to a one-club peer
\end{itemize}
Events of the second and third type just listed correspond to the two possible ways that infected peers can upload
piece one.  Therefore, the property implies the following lemma, where $\widehat{D}$ is the cumulative number of uploads
of piece one by infected peers, up to time $t$, in the comparison system.

\begin{lemma}  \label{lemma.onecompare}
The process $(D_t: t\geq 0)$ is stochastically dominated by $(\widehat{D}_t: t\geq 0).$
\end{lemma}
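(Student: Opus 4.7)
The plan is to read the stochastic domination directly off the coupling constructed in the paragraph immediately preceding the lemma: on that common probability space I would verify $D_t \leq \widehat{D}_t$ for every $t \geq 0$ pathwise, which is precisely the definition of stochastic domination of $(D_t)$ by $(\widehat{D}_t)$.

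First I would note that an upload of piece one by an infected peer in the original system must be of one of two types: either the infected peer uploads piece one to a young peer (thereby creating a new infected peer) or it uploads piece one to a peer in the one club. Both fall among the three bulleted events in the coupling description, and the coupling guarantees that any such event in the original system happens simultaneously in the comparison system. Since $\widehat{D}$ counts exactly those two kinds of events in the comparison system, each jump time of $D$ is also a jump time of $\widehat{D}$, which immediately gives $D_t \leq \widehat{D}_t$ pathwise.

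The step where I expect the main obstacle is making the informal coupling rigorous, and this rests on the rate bounds tabulated in Table \ref{table.compare}. I would carry it out by driving the comparison system with independent primitive Poisson streams at rates $\xi U_s$ (seed-created infections), $\xi\mu$ per infected peer (infection-created infections), and $\mu$ per infected peer (uploads to one-club peers), along with independent internal $\mu/2$-clocks per infected peer that determine its $\mathrm{Gamma}(K-1,\mu/2)$ lifetime as in the proof of Lemma \ref{lemma.MG1compare}. The original system would then be obtained by independent thinning: each potential event above is accepted into the original system with probability equal to the ratio of the original's state-dependent conditional rate to the corresponding comparison rate. The bounds in Table \ref{table.compare} guarantee these ratios lie in $[0,1]$, so the thinning is well defined and produces the correct conditional intensities for the original system. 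Finally, because an infected peer in the original system acquires its remaining pieces at rate at least $\mu/2$, supplemented by possibly additional download opportunities, it departs the original system no later than the $(K-1)$st tick of its internal clock, that is, no later than it departs the comparison system. Hence the inclusion of infected-peer populations is preserved at all times, the three bulleted events couple across the two systems as required, and the pointwise inequality $D_t \leq \widehat{D}_t$ follows.
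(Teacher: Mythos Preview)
Your proposal is correct and follows essentially the same approach as the paper: the lemma is read off directly from the coupling described in the paragraph preceding it, by observing that every upload of piece one by an infected peer in the original system (either to a young peer, creating a new infected peer, or to a one-club peer) is also such an event in the comparison system, so $D_t \leq \widehat D_t$ pathwise. Your additional discussion of how to realize the coupling rigorously via Poisson thinning against the dominating rates in Table~\ref{table.compare} goes beyond what the paper spells out (the paper simply asserts the coupling ``should be clear to the reader''), but it is a valid way to make the construction precise.
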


We can identify two kinds of infected peers in the comparison system--the {\em root peers}, which are those created by
the fixed seed, and the infected peers created by other infected peers.   We can imagine
that each root peer affixes its unique signature on the copy of piece one that it receives from the fixed seed. 
The signature is inherited by all copies of piece one subsequently generated from that piece through
all generations of the replication process, in which infected peers upload piece one when creating new infected peers. 
In this way, any upload of piece one by an infected peer can be traced back to a unique root
peer.   In summary, the jumps of $\widehat{D}$ can be partitioned according to which root peer
generated them.  Of course, the jumps of  $\widehat{D}$ associated with a root peer happen after the root peer arrives. Let
$(\widehat{\widehat D}_t : t\geq 0)$ denote a new process which results when all of the uploads of piece one generated by a root peer (in the
comparison system)  are counted at the arrival time of the root peer.    Since $\widehat{\widehat D}$ counts the same events as $\widehat{D},$ but does so
earlier,   $\widehat{D}_t \leq   \widehat{\widehat D}_t$  for all $t\geq 0.$   In view of this and Lemma \ref{lemma.onecompare}, it is sufficient to
prove \eqref{eq.Dineq} with $D$ replaced by $\widehat{\widehat D}.$

The random process $\widehat{\widehat D}$ is a compound Poisson process.   Jumps occur at the arrival times of  root peers
in the comparison system, which form a Poisson process of rate $\xi U_s.$ 
Let $J$ denote the size of the jump of $\widehat{\widehat D}$ associated with a typical root peer.  The distribution of $J$ can be
described by referring to an $M/GI/1$ queueing system with arrival rate $\xi \mu$ and service times having the
distribution of a random variable $\widehat {X}$ which has the Gamma distribution with parameters $K-1$ and
$\mu/2.$   Note that $\rho$ in \eqref{eq.rho} is the usual load factor for the reference queueing system:
$\rho = \xi \mu E[\widehat {X}].$ 
The reference queueing system is similar to the number of infected peers in the comparison system, except that the customers in
the $M/GI/1$ queueing system are served one at a time.    We have  $J=J_1+J_2,$  where
\begin{itemize}
\item $J_1$ is the number of infected peers that are descendants of the root peer (not counting the root peer
itself.)   That includes peers directly created by the root peer, peers created by peers created by the root peer,
and so on, for all generations.   $J_1$ has the same distribution as the number of customers in
a busy period of the reference queueing system, not counting the customer that started the busy period.
\item $J_2$ is the number of uploads of piece one to one-club peers by either the root peer  or
any of the descendants of the root peer.   The sum of all the times that the root peer and its descendants are
in the comparison system is the same as the duration, $L,$  of a busy period of the reference queueing system.
While in the comparison system, those peers upload piece one to the one club with intensity $\mu.$
So $E[J_2]=\mu E[L]$ and $E[J_2^2]=\mu^2 E[L]^2 + \mu E[L].$
\end{itemize}
Using this stochastic description, the formulas for the busy period in an $M/GI/1$ queueing system (\eqref{eq.N} and \eqref{eq.L} in the
appendix), and the facts $\rho < 1/2,$  $E[\widehat {X}]=2(K-1)/\mu,$ and  $\mbox{Var}(\widehat {X}) =  (K-1)(2/\mu)^2,$ yields
\begin{eqnarray*}
E[J] & = &  E[J_1]+E[J_2]  =  \frac{1+\mu E[\widehat {X}]}{1-\rho} -1\\
&  \leq   & 2[1+2(K-1)] \leq 4K
\end{eqnarray*}
and
$$
E[J_1^2] \leq E[(J_1+1)^2] = \frac{1+(\xi U_s)^2 \mbox{Var}(\widehat {X})}{(1-\rho)^3} \leq \frac{1+\rho^2}{(1-\rho)^3}
$$
\begin{eqnarray*}
E[J_2^2] & =&  E[ E[J_2^2 | L] ] = \mu E[L] + \mu^2 E[L^2] \\
&  = &  \frac{\mu E[\widehat {X}]}{1-\rho} + \frac{\mu^2 E[\widehat {X}^2] }{(1-\rho)^3}
\end{eqnarray*}
\begin{eqnarray*}
\lefteqn{E[J^2]}\\
 & = & E[(J_1 + J_2)^2] \leq 2\{ E[J_1^2] + E[J_2^2] \}  \\
& \leq &  16\{ 2   + \mu E[\widehat {X}] + \mu^2 E[\widehat {X}^2] \}  \\
& = & 16\left\{ 2   + 2(K-1) + 4(K-1) + 4(K-1)^2 \right\} \\
& =  & 16\left\{  4K^2 - 2K  \right \}    \leq   64K^2 
\end{eqnarray*}
Thus, $\widehat{\widehat D}$ is a compound Poisson process with arrival rate of batches equal to $\xi U_s$ and batch
sizes with first and second moments of the batch sizes bounded by $4K$ and $64K^2$
respectively.   Hence,  \eqref{eq.Dineq} with $D$ replaced by $\widehat{\widehat D}$ follows from Corollary \ref{cor:compoundKingman} and
\eqref{eq.BMM1}.   The proof of Proposition \ref{prop.seed_only}(i) is complete.

\begin{remark}
We briefly explain why the comparison system was introduced in the above proof, to provide a better understanding
of the proof technique.  The intuitive idea behind the definition of the comparison system is that it is based on worst
case assumptions regarding the number of peers that are infected by the fixed seed (i.e. the number of root peers) and
the number of uploads of piece one that can be caused by each root peer.   The advantage is then that the arrivals
of root peers form a Poisson process and the total number of uploads of piece one that can be traced back to different
root peers are independent in the comparison system, so that Kingman's bound for compound
Poisson processes, which is a form of the law of large numbers, can be applied.
\end{remark}

\section{Proof of stability if $\lambda < U_s$}  \label{sec:stability}

Proposition \ref{prop.seed_only}(ii) is proved in this section, using the
version of the Foster-Lyapunov stability criterion given in the appendix, and
the intuition given in the last paragraph of Section \ref{sec:formulation}.

If $V$ is a function on the state space $\cal S,$  then $QV$ is the
corresponding drift function, defined by
$
QV(\mathbf{x}) 
=\sum_{\mathbf{y}:\mathbf{y}\neq \mathbf{x}}  q(\mathbf{x},\mathbf{y}) [V(\mathbf{y})-V(\mathbf{x})].
$
If, as usual, the diagonal entries of $Q$ are defined to make the row sums zero, then the drift function is also
given by matrix-vector multiplication:
$QV(\mathbf{x}) = \sum_{\mathbf{y}}  q(\mathbf{x},\mathbf{y}) V(\mathbf{y}).$

Suppose  $\lambda < U_s.$   Given a state $\mathbf{x}$, let $n_i(\mathbf{x})=\sum_{c\in {\cal C} : |c|=i }  x_c.$    That is, $n_i(\mathbf{x})$
is the number of peers with precisely $i$ pieces.
When the dependence on $\mathbf{x}$ is clear, we write $n_i$ instead of $n_i(\mathbf{x}).$   
We shall use the Foster-Lyapunov criteria with the
following potential function:
$V(\mathbf{x})=\sum_{i=0}^{K-1} b_i \Phi_i(\mathbf{x})$ where $b_0, \ldots , b_{K-1}$ are positive constants and
$\Phi_i(\mathbf{x})=\frac{(n_0+ \cdots + n_i )^2}{2}.$

Let $D_i(\mathbf{x})$ denote the sum,  over all $n_i$ peers with $i$ pieces,  of the download rates of those peers.
Since  any peer with $i+1$ or more pieces always has a useful piece for a peer with $i$ pieces, it follows that
$ D_i(\mathbf{x})  \geq d_i (\mathbf{x}) ,$ where
\begin{equation}  \label{eq.ddef}
d_i(\mathbf{x}) = {n_i \left (U_s+\mu\sum_{j=i+1}^{K-1}{n_j }\right) \over |\mathbf{x}| }.
\end{equation}
We shall write $d_i$ instead of $d_i(\mathbf{x}).$     We have
\begin{eqnarray*}
\lefteqn{Q \Phi_i (\mathbf{x}) } \\
&  \leq  &\frac{\lambda \left[ (n_0+ \cdots + n_i +1)^2 - (n_0+ \cdots + n_i)^2\right]}{2}  +\\
&&~~~~~~~  \frac{d_i  \left[ (n_0+ \cdots + n_i  - 1)^2 - (n_0+ \cdots + n_i)^2\right]}{2}    \\
& =  & (\lambda - d_i)  \left[ n_0+ \cdots + n_i \right]  + \frac{\lambda + d_i}{2}  \\
&\leq  &  \lambda   \left[ n_0+ \cdots + n_i  + \frac{1}{2} \right]  - \left(n_i-\frac{1}{2}\right) d_i 
\end{eqnarray*}
Since $QV=\sum_{i=0}^{K-1} b_i Q \Phi_i$ it follows that
\begin{equation} \label{eq.QVbnd}
QV(\mathbf{x}) \leq \frac{a_o \lambda}{2} +\left( \lambda \sum_{i=0}^{K-1} n_i a_i  \right) -  \sum_{i=0}^{K-1} \left(n_i-\frac{1}{2}\right)b_id_i
\end{equation}
where $a_i=b_i+ \cdots + b_{K-1}$ for $0 \leq i \leq K-1.$   In what follows, assume that the constants
$b_0, \dots , b_n$ are chosen so that $1=b_{K-1} < b_{K-2} < \cdots  < b_1 < b_0$  and
\begin{equation}  \label{eq.a_cond_1}
b_i >  \left(  \frac{\lambda  }{U_s - \lambda} \right) a_{i+1}  ~~\mbox{for}~  0\leq i \leq K-2 .
\end{equation}
Since $a_{i+1}=a_i -  b_i,$ \eqref{eq.a_cond_1} is equivalent to
\begin{equation}  \label{eq.a_cond_2}
U_sb_i -\lambda a_i > 0   ~~\mbox{for}~  0\leq i \leq K-2 .
\end{equation}

The following two lemmas and their proofs correspond to the
two cases described in the intuitive description given in the last paragraph of
Section  \ref{sec:formulation}.
\begin{lemma}  \label{lem:d_LF_K}
There exist positive values $\eta,\epsilon,$ and $L$ so that $QV(\mathbf{x})\leq -\epsilon |\mathbf{x}|$ whenever:   $|\mathbf{x}|\geq L$ and, for some $i$, $n_i \geq (1-\eta) |\mathbf{x}|.$
\end{lemma}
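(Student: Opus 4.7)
The plan is to exploit inequality \eqref{eq.a_cond_2} directly in the bound \eqref{eq.QVbnd}, using that most peers sit at a single pieces-count $i$ and that the seed alone already contributes a bulk-order dissipative term.

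First I would observe that the strict inequality \eqref{eq.a_cond_2} is available for all $0\le i\le K-1$: for $i\le K-2$ it is the standing assumption on the coefficients $b_i$, and for $i=K-1$ it reduces to $U_s b_{K-1}-\lambda a_{K-1}=U_s-\lambda>0$ since $b_{K-1}=a_{K-1}=1$. Define the uniform slack
\[
\gamma \;=\; \min_{0\le i\le K-1}\bigl(U_s b_i-\lambda a_i\bigr)\;>\;0.
\]
Next, note that every term of the dissipative sum in \eqref{eq.QVbnd} is nonnegative: if $n_j=0$ then $d_j=0$ by \eqref{eq.ddef}, and if $n_j\ge 1$ then $n_j-1/2\ge 1/2$ and $d_j\ge 0$. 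Hence I may keep only the index $i$ promised by the hypothesis and drop the others.

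With $i$ fixed so that $n_i\ge (1-\eta)|\mathbf{x}|$, the remaining estimates are straightforward. From \eqref{eq.ddef} the seed term alone gives $d_i\ge n_iU_s/|\mathbf{x}|\ge (1-\eta)U_s$. The driving term splits as
\[
\sum_{j=0}^{K-1}n_j a_j \;\le\; n_i a_i + a_0\sum_{j\neq i}n_j \;\le\; n_i a_i + a_0\eta|\mathbf{x}|.
\]
Plugging both estimates into \eqref{eq.QVbnd} and using $n_i\ge (1-\eta)|\mathbf{x}|$ produces, after collecting constants into a single $C=C(\lambda,U_s,b_0,\dots,b_{K-1})$,
\[
QV(\mathbf{x}) \;\le\; \bigl[\lambda a_i-(1-\eta)U_s b_i\bigr]n_i + \lambda a_0\eta|\mathbf{x}| + C.
\]
Rewriting the bracket as $-(U_s b_i-\lambda a_i)+\eta U_s b_i\le -\gamma+\eta U_s b_i$ and using $n_i\ge(1-\eta)|\mathbf{x}|$ yields
\[
QV(\mathbf{x}) \;\le\; \bigl[-(1-\eta)(\gamma-\eta U_sb_i) + \lambda a_0\eta\bigr]|\mathbf{x}| + C.
\]

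Finally, I would choose $\eta>0$ small enough (depending only on $\gamma$, $a_0$, $U_s$, and $\max_i b_i$) that the bracketed coefficient is at most $-\gamma/4$, then choose $L$ large enough that $C\le (\gamma/8)L$, giving $QV(\mathbf{x})\le -\epsilon|\mathbf{x}|$ with $\epsilon=\gamma/8$ whenever $|\mathbf{x}|\ge L$.

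I do not anticipate a real obstacle here; the proof is essentially bookkeeping around \eqref{eq.QVbnd}. The two points one must not miss are (i) that the dissipative contribution from index $i$ alone survives even when $n_{i+1}=\cdots=n_{K-1}=0$, because the seed term $U_s$ inside $d_i$ is bulk-order once $n_i/|\mathbf{x}|$ is close to $1$, and (ii) that the $i=K-1$ case, not covered by \eqref{eq.a_cond_2} as stated, is nevertheless handled by the standing hypothesis $\lambda<U_s$, so $\gamma$ can be taken strictly positive across all indices uniformly.
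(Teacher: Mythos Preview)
Your proposal is correct and follows essentially the same approach as the paper's proof: both arguments drop all dissipative terms except the one at index $i$, use only the seed contribution $n_iU_s/|\mathbf{x}|$ inside $d_i$, bound the driving sum by $n_ia_i$ plus an $O(\eta|\mathbf{x}|)$ remainder, and then invoke $U_sb_i-\lambda a_i>0$. Your write-up is in fact slightly cleaner than the paper's in two respects: you make explicit that the case $i=K-1$ (not literally covered by \eqref{eq.a_cond_2} as stated) follows from $\lambda<U_s$, and you package the slack uniformly via $\gamma=\min_i(U_sb_i-\lambda a_i)$ rather than treating each $i$ separately.
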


\begin{lemma}\label{lem:d_LF_NK}
Let $\eta$ be as in Lemma \ref{lem:d_LF_K}.  There exist positive values $\epsilon'$ and $L'$  so that $QV(\mathbf{x})\leq -\epsilon' |\mathbf{x}|$ whenever:
$|\mathbf{x}|\geq L'$ and, for all $i$,  $n_i \leq (1-\eta)|\mathbf{x}| .$
\end{lemma}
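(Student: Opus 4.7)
The plan is to exploit the hypothesis that no single group $n_i$ holds almost all the peers, so that by a pigeonhole argument there must exist two distinct indices $i_1 < i_2$ at which the populations are both of order $|\mathbf{x}|$. Since every peer holding more pieces is useful to every peer holding fewer, this pair will force $d_{i_1}$ to be of order $|\mathbf{x}|$, making the summand $(n_{i_1}-1/2)b_{i_1}d_{i_1}$ in the bound \eqref{eq.QVbnd} of order $|\mathbf{x}|^2$, which easily overwhelms the $O(|\mathbf{x}|)$ positive drift from arrivals.

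For the pigeonhole step, let $i^*$ maximize $n_i$. Because $\sum_i n_i = |\mathbf{x}|$ we have $n_{i^*}\ge |\mathbf{x}|/K$, and the hypothesis gives $n_{i^*}\le (1-\eta)|\mathbf{x}|$, so $\sum_{i\ne i^*}n_i\ge \eta|\mathbf{x}|$ and some $i'\ne i^*$ satisfies $n_{i'}\ge \eta|\mathbf{x}|/(K-1)$. Setting $\delta=\min\{1/K,\eta/(K-1)\}$, $i_1=\min(i^*,i')$, and $i_2=\max(i^*,i')$ produces $i_1<i_2$ with $n_{i_1},n_{i_2}\ge \delta|\mathbf{x}|$. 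Because $i_2>i_1$, the definition \eqref{eq.ddef} immediately gives
$$d_{i_1}\ \ge\ \frac{\mu\, n_{i_1}n_{i_2}}{|\mathbf{x}|}\ \ge\ \mu\delta^2|\mathbf{x}|.$$

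Each summand $(n_i-1/2)b_id_i$ in \eqref{eq.QVbnd} is nonnegative (it vanishes when $n_i=0$, since then $d_i=0$, and is a product of nonnegative factors when $n_i\ge 1$), so I will discard every term in the sum but the $i_1$-term and use $b_{i_1}\ge b_{K-1}=1$. For $|\mathbf{x}|$ large enough that $n_{i_1}-1/2\ge \delta|\mathbf{x}|/2$, this yields
$$QV(\mathbf{x})\ \le\ \frac{a_0\lambda}{2}\ +\ \lambda a_0 |\mathbf{x}|\ -\ \frac{\mu\delta^3}{2}\,|\mathbf{x}|^2,$$
and any $L'$ for which the quadratic term exceeds the linear and constant ones by at least $\epsilon'|\mathbf{x}|$ finishes the argument. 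The only mild care needed is to track $\delta$ through the bound so that the $|\mathbf{x}|^2$ term has an explicit positive coefficient; there is no substantive obstacle, because the quadratic negative drift produced by even a single mismatched pair is overwhelming, and in particular the delicate assumption \eqref{eq.a_cond_2} on the weights $b_i$ plays no role in this case.
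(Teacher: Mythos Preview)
Your proof is correct and follows essentially the same approach as the paper's: find two indices $i_1<i_2$ with $n_{i_1},n_{i_2}=\Theta(|\mathbf{x}|)$ by pigeonhole, then retain only the $i_1$-term of the sum in \eqref{eq.QVbnd} to produce a negative $\Theta(|\mathbf{x}|^2)$ contribution that dominates the $O(|\mathbf{x}|)$ positive drift. Your pigeonhole argument and your estimate $\lambda\sum_i n_i a_i\le \lambda a_0|\mathbf{x}|$ are actually slightly tighter than the paper's (which uses the common lower bound $\eta|\mathbf{x}|/K$ for both indices and the cruder $\lambda a_0 K|\mathbf{x}|$), but the structure is identical, and your remark that condition \eqref{eq.a_cond_2} plays no role here is also correct.
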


Lemmas \ref{lem:d_LF_K} and \ref{lem:d_LF_NK} imply that $QV(\mathbf{x})< -\min\{\epsilon',\epsilon\} |\mathbf{x}|  $ whenever
$|\mathbf{x}|>\max\{L,L'\}$, so that $Q$ and $V$ satisfy the conditions of Proposition \ref{cor.FosterCompContinuous}
with $f(\mathbf{x})=\min\{\epsilon',\epsilon\} |\mathbf{x}| $  and $g(\mathbf{x}) = B\mathbbm{1}_{\{|\mathbf{x}|| \leq \max\{L,L'\}  \} }$ where
$B=\max \{  QV(\mathbf{x})   :  |\mathbf{x}| \leq \max\{L,L'\} \}.$
Therefore, to complete the proof of Proposition \ref{prop.seed_only}(ii)
 it remains to prove Lemmas \ref{lem:d_LF_K} and \ref{lem:d_LF_NK}. 

\begin{proof} (Proof of Lemma \ref{lem:d_LF_K}.)
It suffices to prove the lemma for an arbitrary choice of $i.$   So fix $i\in\{0,1,2,...K-1\}, $   and consider a state $\mathbf{x}$
such that $n_i/|\mathbf{x}|> 1-\eta$ (and, in particular, $n_i\geq1$).   Then for any $j\neq i$, $n_j/n_i=(n_j/|\mathbf{x}|)(|\mathbf{x}|/n_i)<\frac{\eta}{1-\eta}.$
Use \eqref{eq.ddef} and \eqref{eq.QVbnd} and an interchange of summation
($\sum_{i=0}^{K-1} \sum_{j=i+1}^{K-1} =  \sum_{j=1}^{K-1} \sum_{i=0}^{j-1}$)
 to get
\begin{eqnarray}
\lefteqn{QV(\mathbf{x} )} \nonumber  \\
&\leq& \frac{a_0\lambda}{2} +  n_i\left(a_i+\displaystyle\sum_{j=0,j\neq i}^{K-1}{{n_j\over n_i}a_j}\right)\lambda - \left(n_i-{1\over2}\right)b_i d_i \nonumber\\
&\leq& \frac{a_0\lambda}{2} + n_ia_i\Big(1 + {Ka_0\over a_i}{\eta \over 1-\eta}\Big)\lambda  \nonumber \\
&&  - \left(n_i-{1\over2}\right)b_i{n_i
\left(U_s+\mu\sum_{j=i+1}^{K-1}{n_j}\right) \over |\mathbf{x}| }\nonumber\\
&\leq& \frac{a_0\lambda}{2} \nonumber \\
&&+ n_i\left\{ a_i\Big(1+{Ka_0\over a_i}{\eta \over 1-\eta }\Big)\lambda - b_i(1-\eta)U_s + {b_i U_s \over 2|\mathbf{x}|}\right\} 
\label{eq.QV_bnd}
\end{eqnarray}
Notice that according to \eqref{eq.a_cond_2},
\begin{eqnarray*}
\lim_{\eta \rightarrow  0}\left\{a_i\left(1+{Ka_0\over a_i}
  {\eta \over 1-\eta }\right)\lambda - b_i (1-\eta) U_s\right\}   \\
= a_i\lambda-b_iU_s<0,
\end{eqnarray*}
and
$$\displaystyle\lim_{|\mathbf{x}|\rightarrow\infty}{b_i U_s \over 2|\mathbf{x}|} = 0.$$
Thus, if $\eta$ is small enough and $ | \mathbf{x}|  $ is large enough, the
quantity within braces in \eqref{eq.QV_bnd} is negative.   Therefore,
if $\eta$ and $\epsilon$ are small enough, and $L$ is large enough,
$$
QV(\mathbf{x})\leq\frac{    a_0 \lambda+ n_i  \{ a_i\lambda-b_iU_s \}   }{2}      \leq -\epsilon |\mathbf{x}|
$$
under the conditions of the lemma, whenever $|\mathbf{x}|\geq L.$   Lemma \ref{lem:d_LF_K} is proved.
\end{proof}

\begin{proof} (Proof of Lemma \ref{lem:d_LF_NK}.)
Let $\eta$ be given by Lemma \ref{lem:d_LF_K}, and consider a state $\mathbf{x}$ such that $n_i/|\mathbf{x}|\leq 1-\eta$ for all $i$.
It follows that there exists $i_1$ and $i_2$ with $0\leq i_1 < i_2 \leq K-1$ such that $n_{i_1} \geq \frac{\eta  |\mathbf{x}|}{K}$
and $n_{i_2} \geq \frac{\eta  |\mathbf{x}|}{K}.$  Then
\begin{eqnarray}
\lefteqn{QV(\mathbf{x})} \nonumber   \\
&\leq&\frac{a_0\lambda}{2}+ |\mathbf{x}|a_0 K\lambda-\left(n_{i_1}-{1\over2}\right)b_{i_1}  d_{i_1}\nonumber\\
&=&\frac{a_0\lambda}{2}+|\mathbf{x}| a_0 K\lambda  \nonumber   \\
&& -\left(n_{i_1} -{1\over2}\right)b_{i_1}
{n_{i_1}(U_s+\mu\sum_{j=i_1+1}^{K-1}{n_j}) \over |\mathbf{x}| }\nonumber\\
&\leq&\frac{a_0\lambda}{2}+ |\mathbf{x}| a_0 K\lambda-\left(\frac{\eta|\mathbf{x}|}{K}-{1\over2}\right)b_{i_1}{\eta^2|\mathbf{x}| \over  K^2 }\mu\nonumber\\
&\leq&\frac{a_0\lambda}{2}+|\mathbf{x}|\left\{ a_0 K\lambda +{b_0\mu \over2} \right\}
   -  \left({\eta \over K}\right)^3|\mathbf{x}|^2 \mu  \label{eqn:upper_bound_dV_ni_nj}
\end{eqnarray}
The conclusion of the lemma follows because of the term in \eqref{eqn:upper_bound_dV_ni_nj}  that is quadratic in $|\mathbf{x}|.$
\end{proof}

\section{Generalization and Discussion}   \label{sec:extensions}
\subsection{General Piece Selection Policies}
A piece selection policy is used by a peer to choose which piece to download whenever it contacts another peer.
The random useful piece selection policy is assumed above, but the results extend to a large class of piece
selection policies.   Essentially the only restriction needed is that if the contacted peer has a useful piece for the
contacting peer, then a useful piece must be downloaded.     This restriction is similar to a work
conserving restriction in the theory of service systems.     In particular, the results hold for a broad class of
rarest first piece selection policies.     Peers can estimate which pieces are more rare in a distributed way,
by exchanging information with the peers they contact.   Even more general policies would allow the piece
selection to depend in an arbitrary way on the piece collections of all peers.     Interestingly enough,
the results extend even to seemingly bad piece selection policies.   For example, it includes the
sequential piece selection policy, in which peers obtain the pieces in order, beginning with piece one.
The sequential policy can be viewed as a  {\em most abundant first} useful piece selection policy, or
just the opposite of rarest piece first.

To be specific, consider the following family $\cal H$ of piece selection policies.   Each policy
 in $\cal H$
corresponds to a mapping $h$ from ${\cal C} \times ({\cal C}  \cup \{ {\cal F} \}  ) \times {\cal S}$
to the set of probability distributions on $\cal F,$   satisfying the usefulness constraint:
$$ \sum_{i\in B-A}  h_i(A,B,\mathbf{x})=1~~\mbox{whenever}~ B \not\subset A $$
with the following meaning of $h$:
\begin{itemize}
\item When a type $A$ peer selects a piece to download from a type $B$ peer and the state of the entire network is $\mathbf{x},$
piece $i$ is selected with probability $h_i(A,B,\mathbf{x}),$  for $i \in {\cal F}.$
\item When the fixed seed selects a piece to upload to a type $A$ peer and the state of the entire network is $\mathbf{x},$
piece $i$ is selected with probability  $h_i(A,{\cal F},\mathbf{x}),$  for $i \in {\cal F}.$
\end{itemize}
The piece selection policies noted above are included in $\cal H.$

Reconsider the proof of transience in Section \ref{sec:instability} under a piece selection policy in $\cal H$. 
From any state it is possible to reach the empty state, and from the empty state it is possible to reach a state
with one peer in the network having all pieces except some piece $i_0.$  
From that state, for any $N_o\geq 1$, it is
possible to reach the state with $N_o$ peers missing only piece $i_0,$  and no other peers in the network.
It may be impossible for $i_0$ to equal one, but by
renumbering the pieces if necessary, it can be assumed without loss of generality that $i_0$ is one.  
Thus, whatever piece selection policy in $\cal H$ is applied,  beginning from any initial state, for
any $N_o \geq 1,$  in a finite time with a
positive probability, the system can arrive into the state where there are $N_o$ peers and all of them are one-club peers.
Thus, as in Section \ref{sec:instability}, to prove transience it suffices to show that from such an initial state, there
is a positive probability that the number of peers converges to infinity.
The arrival rate of new peers and the upload rate of the seed does not depend on the piece selection policy, so \eqref{eq.Aineq} and \eqref{eq.Zineq} are valid for any piece selection policies in $\cal H$. Moreover, Lemma \ref{lemma.MG1compare} and Lemma \ref{lemma.onecompare} are valid for any piece selection policies in $\cal H$ because the two lemmas depend on the properties that peer selection is uniformly random and the piece selection is useful if a useful piece is available.   Therefore \eqref{eq.Yineq} and \eqref{eq.Dineq} are also valid for any piece selection policy in $\cal H$. Thus, we conclude that the
proof of Proposition \ref{prop.seed_only}(i) in Section \ref{sec:instability} works for any piece selection policy in $\cal H.$

Reconsider next the proof of positive recurrence in Section \ref{sec:stability}, but for an
arbitrary piece selection policy in $\cal H.$   The inequalities developed for the proofs of
Lemmas \ref{lem:d_LF_K}  and \ref{lem:d_LF_NK} hold with the same Lyapunov function;
useful piece selection suffices.
Thus, if $\lambda < U_s,$  it can be shown that the Lyapunov stability condition,
namely $QV( \mathbf{x} ) \leq -\epsilon | \mathbf{x} |,$  for $|\mathbf{x}|$ sufficiently large,
still holds.    The final conclusion has to be modified, however, because under some policies
in $\cal H,$  the Markov process might no longer be irreducible.   For example, with the sequential
 useful piece selection policy, the set of states such that every peer holds
a set of pieces of the form $\{1,2, \ldots , J\}$ for some $J$ with $0\leq J \leq K-1,$
is a closed subset of states, in the terminology of classification of states of discrete-state Markov
processes.     In general, the set of all states that are reachable from the empty
state is the unique minimal closed set of states, and the process restricted to that set of states
is irreducible.   By a minor variation of the Foster-Lyapunov stability proposition, the Lyapunov
stability condition implies that the Markov process restricted to that closed set of states is
positive recurrent, and the mean time to reach the empty state beginning from an arbitrary initial state is finite.

We summarize the discussion of the previous two paragraphs as a proposition.
\begin{prop}   (Stability conditions for general useful piece selection policies)
Suppose a useful piece selection policy from $\cal H$ is used, for a network
with random peer contacts and parameters $K$, $\lambda$, $U_s,$ and $\mu$ as in Section \ref{sec:formulation}.
There is a single class of closed states containing the empty state, and all other states are transient.  
 (i)  If  $\lambda > U_s$  then the Markov process
 is transient, and the number of peers in the system converges to infinity with probability one.
(ii)     If  $\lambda < U_s$  the Markov process with generator $Q$ restricted to the closed set of states is positive recurrent,
the mean time to reach the empty state from any initial state has finite mean, and the equilibrium distribution $\pi$ is
such that $\sum_{ \mathbf{x}}  \pi(\mathbf{x})  |\mathbf{x}| < \infty.$   
\end{prop}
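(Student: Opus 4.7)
The plan is to check that each ingredient of the proofs of Proposition \ref{prop.seed_only}(i) and (ii) continues to work for an arbitrary $h\in{\cal H}$, and then to address the fact that the Markov process need no longer be irreducible over all of $\cal S$.

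\textbf{Closed classes.} First I would verify that the empty state is reachable with positive probability from every state $\mathbf{x}\in{\cal S}$, under any $h\in{\cal H}$: conditioned on no arrivals for a long enough period, the fixed seed (which acts independently of $h$ at total rate $U_s$) eventually uploads to each peer all the pieces it lacks, so all current peers depart. Hence the set ${\cal S}_0$ of states reachable from the empty state is the unique minimal closed set, the restriction of $Q$ to ${\cal S}_0$ is irreducible, and every state outside ${\cal S}_0$ is transient (it reaches ${\cal S}_0$ in finite mean time and never returns).

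\textbf{Part (i), $\lambda>U_s$.} Since transience is a class property, it suffices to exhibit one transient state in ${\cal S}_0$. By the reachability argument sketched in the paragraph preceding the statement, for every $N_o$ there is an $i_0\in{\cal F}$ (depending on $h$) such that the state with $N_o$ peers that all hold ${\cal F}\setminus\{i_0\}$ is reachable from the empty state; by renumbering pieces we may take $i_0=1$. Starting from that state, I would re-derive the four tail bounds \eqref{eq.Aineq}--\eqref{eq.Dineq}. Bounds \eqref{eq.Aineq} and \eqref{eq.Zineq} are untouched because the arrival and seed-push Poisson rates do not depend on $h$. For \eqref{eq.Yineq}, Lemma \ref{lemma.MG1compare} goes through: its proof used only uniform random peer contact and the fact that every contact of a young peer with a one-club peer produces a useful download, which holds for every $h\in{\cal H}$. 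For \eqref{eq.Dineq}, I would re-examine Table \ref{table.compare}: each of its three upper bounds (seed push rate $\xi U_s$, per-infected-peer contact rate $\mu$, per-infected-peer upload rate $\mu$ into the one-club) depends only on the contact mechanism and the usefulness constraint, not on which useful piece is chosen. Hence Lemma \ref{lemma.onecompare} and the compound-Poisson estimate for $\widehat{\widehat D}$ remain valid, and the argument of Section \ref{sec:instability} concludes transience of the chosen state.

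\textbf{Part (ii), $\lambda<U_s$.} The key inequality behind the drift bound \eqref{eq.QVbnd} was $D_i(\mathbf{x})\geq d_i(\mathbf{x})$, which is a pure consequence of the usefulness constraint: any peer with more than $i$ pieces carries a useful piece for a peer with $i$ pieces, no matter which $h$ selects among such useful pieces. Consequently the drift estimates of Lemmas \ref{lem:d_LF_K} and \ref{lem:d_LF_NK} hold verbatim, and $QV(\mathbf{x})\leq -\min(\epsilon,\epsilon')|\mathbf{x}|$ outside a finite set. Applying Proposition \ref{cor.FosterCompContinuous} to the restriction of $Q$ to the irreducible closed class ${\cal S}_0$ yields positive recurrence of that restriction, $\sum_{\mathbf{x}\in{\cal S}_0}\pi(\mathbf{x})|\mathbf{x}|<\infty$, and a finite mean hitting time of the empty state from every $\mathbf{x}\in{\cal S}_0$; the hitting-time claim extends to arbitrary initial states because transient states outside ${\cal S}_0$ reach ${\cal S}_0$ in finite mean time.

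The main obstacle is the domination step in part (i): one must be confident that no exotic $h\in{\cal H}$ can make infected young peers disseminate piece one faster than the comparison system predicts. The safeguard is that $h$ only decides \emph{which} useful piece is transferred at a contact, whereas the rate at which infected peers contact one-club peers and the rate at which the seed contacts peers are set by the Poisson contact processes and are policy-independent; this is precisely what makes Table \ref{table.compare} and Lemma \ref{lemma.onecompare} survive the generalization.
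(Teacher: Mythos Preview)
Your proposal is correct and follows essentially the same route as the paper: reachability of the empty state to identify the unique minimal closed class, reuse of \eqref{eq.Aineq}--\eqref{eq.Dineq} for part (i) after noting that Lemmas \ref{lemma.MG1compare} and \ref{lemma.onecompare} depend only on uniform random contact and the usefulness constraint, and reuse of the drift bound $D_i(\mathbf{x})\ge d_i(\mathbf{x})$ for Lemmas \ref{lem:d_LF_K}--\ref{lem:d_LF_NK} in part (ii). One small wording point: the seed's \emph{piece choice} does depend on $h$ (it uses $h_i(A,{\cal F},\mathbf{x})$), so it is not that the seed ``acts independently of $h$''; what matters is that the usefulness constraint forces every seed upload to be useful, which is all you need for the emptying argument.
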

Thus, with the exception of the borderline case $\lambda=\mu,$  rarest first piece selection does not increase the
region of stability, nor does most abundant first piece selection decrease the region of stability.

\subsection{Network Coding}

Network coding, introduced by Ahlswede, Cai, and Yeung,  \cite{AhlswedeCaiLiYeung},
can be naturally incorporated into P2P distribution networks, as noted in
\cite{GkantsidisRodriguez05}.    The related work \cite{DebMedardChoute06} considers
all to all exchange of pieces among a fixed population of peers through random contacts
and network coding.    The method
can be described as follows.    The file to be transmitted is divided into
$K$ data pieces, $m_1, m_2, \ldots  , m_K.$     The data pieces are
taken to be vectors of some fixed length $r$ over a finite field $\mathbb{F}_q$ with
$q$ elements, where $q$ is some power of a prime number.
If the piece size is $M$ bits, this can be done by viewing each message as an
$r=\lceil M / \log_2(q) \rceil$ dimensional vector over $\mathbb{F}_q.$
Any coded piece $e$ is a linear combination of
the original $K$ data pieces:
$e=\sum_{i=1}^K \theta_i m_i;$
the vector of coefficients $(\theta_1, \ldots , \theta_K)$ is called
the {\em coding vector} of the coded piece; the coding vector
is included whenever a coded piece is sent.
The fixed seed uploads coded pieces to peers, and peers
exchange coded pieces.      In this context, the type of a peer $A$ is the
subspace $V_A$ of $\mathbb{F}_q^K$
spanned by the coding vectors of the coded pieces it has
received.   Once the dimension of $V_A$ reaches $K$, peer $A$
can recover the original message.

When peer A contacts peer B, suppose peer B sends peer
A a random linear combination of its coded pieces, where
the coefficients are independent and uniformly distributed
over  $\mathbb{F}_q.$   Equivalently, the coding vector of the
coded piece sent from $B$ is uniformly distributed over
$V_B.$      The coded piece is considered useful to $A$ if adding
it to $A$'s collection of coded pieces increases the dimension of $V_A.$
Equivalently, the piece from $B$ is useful to $A$ if its coding vector is
not in the subspace $V_A \cap V_B.$    The probability the piece is useful to
$A$ is therefore given by
$$
P\{\mbox{piece is useful}\} = \frac{|V_A\cap V_B|}{|V_B|}
= 1-q^{  dim(V_A\cap V_B)  - dim(V_B) } 
% \geq 1- q^{dim(V_B)-dim(V_A)}
.
$$
If peer $B$ can possibly help peer $A$, meaning $V_B\not\subset V_A$
(true, for example, if $dim(V_B) > dim(V_A)$),
the probability that a random coded piece from B is helpful to A is greater than or equal to $1-\frac{1}{q}.$
The probability a random coded piece from the seed is
useful to a peer $A$ with $dim(V_A)=K-1$ is precisely $1-\frac{1}{q}.$
Therefore, when all peers have the same state and the common state has dimension $K-1$, the departure
rate from the network is $\widetilde{U}_s = U_s(1-\frac{1}{q}).$

The network state $\mathbf{x}$ specifies the number of peers in the network of each type. 
There are only finitely many types, so the overall state space is still countably infinite.  
Moreover, the Markov process is easily seen to be irreducible.

Reconsider the proof of transience  in Section \ref{sec:instability}, but now under network coding.
Fix any subspace $V^-$ of $\mathbb{F}_q^K$ with dimension $K-1.$ 
Call a peer a one-club peer if its
state is $V^-.$    For any $N_o\geq 1$, it is possible to reach the state with $N_o$ one-club peers and no other
peers in the network.
As before, call a peer a young peer if it is not a one-club peer.  In the case of network coding,
call a peer infected if its state is not a subspace of $V^-.$  The only way a peer can become
infected is by downloading a piece either from the seed or from an infected peer.
Lemmas  \ref{eq.Aineq} and \ref{eq.Zineq} are valid for network coding, if the condition $\lambda > U_s$
is replaced by $\lambda >  \widetilde{U}_s.$
Moreover, Lemma \ref{lemma.MG1compare} and Lemma \ref{lemma.onecompare} are valid for network
 coding  because the two lemmas depend on the properties that peer selection is uniformly random and the rate useful
 pieces are delivered by the seed to one-club peers is arbitrarily close to $\widetilde{U}_s.$
 Thus, we conclude that Proposition \ref{prop.seed_only}(i) in Section \ref{sec:instability}, with $U_s$
 replaced by $\widetilde{U}_s,$  extends to the case of network coding.
 
 Reconsider the proof of positive recurrence in Section \ref{sec:stability}, but with random
useful piece selection replaced by network coding as described, and $U_s$ replaced
by $\widetilde{U}_s=U_s(1-\frac{1}{q}).$  Suppose the same Lyapunov function is
used, except the new meaning of $n_i(\mathbf{x}),$ or $n_i$ for short,  is the number of peers
$A$ with $dim(V_A)=i.$
Lemmas  \ref{lem:d_LF_K} and \ref{lem:d_LF_NK} are valid for network coding, if the condition $\lambda < U_s$
is replaced by $\lambda < \widetilde{U}_s.$
Thus, if $\lambda < \widetilde{U}_s,$  it can be shown that the Lyapunov stability condition,
namely $QV( \mathbf{x} ) \leq -\epsilon | \mathbf{x} |,$  for $|\mathbf{x}|$ sufficiently large,
still holds, and the Foster-Lyapunov stability criterion applies.

We summarize the discussion of the previous  two paragraphs as a proposition.
\begin{prop}    (Stability conditions for network coding based system)
Suppose random linear network coding with vectors over $\mathbb{F}_q^K$ is used,
with random peer contacts and parameters $K$, $\lambda$, $U_s,$ and $\mu$ as in Section \ref{sec:formulation}.  
  (i)  If  $\lambda >  U_s(1-\frac{1}{q})$  then the Markov process  is transient, and the number of peers in the system converges to infinity with probability one.
  (ii)     If  $\lambda < U_s(1-\frac{1}{q})$  the Markov process is positive recurrent, and the equilibrium distribution $\pi$ is
such that $\sum_{ \mathbf{x}}  \pi(\mathbf{x})  |\mathbf{x}| < \infty.$  
\end{prop}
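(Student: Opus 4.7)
The plan is to mirror the two halves of the proof of Proposition \ref{prop.seed_only}, everywhere replacing $U_s$ by $\widetilde U_s = U_s(1-1/q)$ and, where needed, $\mu$ by $\widetilde \mu = \mu(1-1/q)$. Two elementary observations about random coded pieces legitimize these substitutions: (a) for a peer $A$ with $\dim V_A = K-1$, a uniformly random vector from $\mathbb{F}_q^K$ sent by the seed lies outside $V_A$ with probability exactly $1-1/q$; (b) whenever $\dim V_B > \dim V_A$, we have $V_B \not\subset V_A$, so a uniformly random vector of $V_B$ is useful to $A$ with probability at least $1-1/q$. Thus the processes of \emph{useful} seed uploads and \emph{useful} inter-peer uploads are obtained by independent thinning with retention probability at least $1-1/q$.

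For part (i), fix any $(K-1)$-dimensional subspace $V^-\subset\mathbb{F}_q^K$ and redefine a \emph{one-club peer} as one whose state is exactly $V^-$, a \emph{young peer} as any other peer, and an \emph{infected young peer} as a peer whose state is not a subspace of $V^-$. By irreducibility the state with $N_o$ one-club peers and no others is reachable with positive probability from any initial state, so it suffices to escape from it. I would reprove the four inequalities \eqref{eq.Aineq}--\eqref{eq.Dineq} with $U_s$ replaced by $\widetilde U_s$: \eqref{eq.Aineq} is unchanged; in \eqref{eq.Zineq}, $Z_t$ is redefined as the number of \emph{useful} seed uploads to one-club peers, which by (a) is dominated by a Poisson process of rate $\widetilde U_s$, so Kingman's bound applies once $\epsilon$ is chosen with $3\epsilon<\lambda-\widetilde U_s$; Lemma \ref{lemma.MG1compare} extends, using a dominating $M/GI/\infty$ service rate $\mu(1-1/q)/2$ (valid by (b)), and gives \eqref{eq.Yineq}; and the comparison-system branching argument for \eqref{eq.Dineq} is unchanged in form, with root peers now arriving at rate $\xi\widetilde U_s$ and with $\rho=2\xi(K-1)\,(1-1/q)$ still $<1/2$ for small enough $\xi$.

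For part (ii), I would keep the Lyapunov function $V(\mathbf{x})=\sum_{i=0}^{K-1} b_i\Phi_i(\mathbf{x})$, reinterpreting $n_i$ as the number of peers $A$ with $\dim V_A=i$. Using (a) and (b), the lower bound \eqref{eq.ddef} becomes
$$
d_i(\mathbf{x}) \;\ge\; \frac{n_i\bigl(\widetilde U_s + \widetilde\mu\sum_{j=i+1}^{K-1} n_j\bigr)}{|\mathbf{x}|}.
$$
The constants $b_i$ are chosen as before except that \eqref{eq.a_cond_2} is strengthened to $\widetilde U_s b_i - \lambda a_i > 0$, which is feasible precisely because $\lambda<\widetilde U_s$. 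With these substitutions the proof of Lemma \ref{lem:d_LF_K} goes through verbatim (the key inequality $a_i\lambda-b_i\widetilde U_s<0$ holds), and the proof of Lemma \ref{lem:d_LF_NK} still produces a quadratic-in-$|\mathbf{x}|$ negative term from the $\widetilde\mu$ piece of $d_{i_1}$. Irreducibility of the chain lets us conclude via Proposition \ref{cor.FosterCompContinuous} that the chain is positive recurrent and $\sum_{\mathbf{x}}\pi(\mathbf{x})|\mathbf{x}|<\infty$.

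The main obstacle is accounting rather than conceptual: one must confirm that every place the earlier arguments silently used ``every seed upload is useful'' or ``every contact with a useful piece delivers one'' can be replaced by the thinned process of useful deliveries, and that this thinning is compatible with the independence assumptions used in the Kingman, $M/GI/\infty$, and $M/GI/1$ comparisons (for which it suffices that each piece transmission carries its own independent uniform coding vector, as assumed). Once this thinning is cleanly stated, all constants ($\epsilon, \xi, \epsilon_o, B, N_o$, and the $b_i$) can be redefined exactly as in Sections \ref{sec:instability} and \ref{sec:stability} with $U_s,\mu$ renamed, completing the proof.
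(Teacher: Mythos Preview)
Your proposal is correct and follows the same route as the paper: both extend Sections~\ref{sec:instability} and~\ref{sec:stability} to network coding by fixing a $(K{-}1)$-dimensional subspace $V^-$, redefining one-club/young/infected accordingly, and replacing $U_s$ by $\widetilde U_s$ (and, as you make more explicit than the paper, $\mu$ by $\widetilde\mu$ in the download-rate lower bounds). One small correction: your redefinition of $Z_t$ as ``useful seed uploads to one-club peers'' omits seed uploads that infect \emph{young} peers, which must also be subtracted in the inequality $N_t\ge N_o+A_t-D_t-Z_t$; the right analog is ``seed uploads whose coding vector lies outside $V^-$,'' and by independent thinning this is a Poisson process of rate exactly $\widetilde U_s$, so your bound for \eqref{eq.Zineq} still goes through.
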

Thus, as $q\rightarrow \infty,$   the stability region for the system with network coding converges to that
for useful piece selection.    Network coding has the advantage that no exchange of state information among
peers is needed because there is no need to identify useful pieces.

%\subsection{Parallel Connection}
%In BitTorrent, a user typically maintains four links \cite{Cohen03}. Correspondingly, consider in the model that the seed and all peers contact with $R\in\mathbb{N}^+$ (a positive constant) uniformly selected targets simultaneously, according to Poisson processes as described in Section \ref{sec:formulation}. And suppose other assumptions of the model in Section \ref{sec:formulation} remain the same. We can either assume that a piece can be downloaded multiple times when it is selected simultaneously from multiple targets, or make other assumptions to avoid the selection conflict; the discussion in the following will be the same in both cases. In Section \ref{sec:stability}, it is easy to see that the inequality $D_i\geq d_i$ with $d_i$ defined as \eqref{eq.ddef} is valid for any value $R$, and so the rest deductions following \eqref{eq.ddef}, which depend on $d_i$'s and not directly depend on $D_i$'s, are all valid. So we can conclude that the result of positive recurrence in Proposition \ref{prop.seed_only} works for any value $R$. In Section \ref{sec:instability}, whatever the value of $R$ is, $N_o$ can be selected to be large enough so that young peers mainly contact with one-club peers so the number of young peers are at constant level stochastically, and almost  no one contact with young peers so infected peers almost does not exist. By considering compound Poisson processes in the comparison 

\subsection{Peer Seeds}
In many unstructured peer-to-peer systems, such as BitTorrent, peers often dwell in the network awhile after they
have collected all the pieces.  In effect, these peers temporarily become seeds, called peer seeds.   The uploading
provided by peer seeds is able to mitigate the missing piece syndrome and enlarge the stability region. Intuitively, if
every peer can upload, on average, just one more piece after collecting all pieces, then every peer
can help one one-club peer to depart, so the missing piece syndrome would not persist. This is
explored for the case of $K=1$ and $K=2$ (for the sequential piece selection policy) in
\cite{LeskelaRobertSimatos10} and  for random useful piece selection with arbitrary  $K\geq 1$  in \cite{ZhuHajek11}.

\subsection{Peer Selection and Tit-for-Tat}
Another way to overcome the missing piece syndrome relies on peer selection policies. For instance, if young peers
contact infected peers preferentially, or if the seed uploads to young peers preferentially, the network can be stabilized
by the resulting increase in the number of infected peers.    So some sort of coordination policy, providing the identification
of rare pieces and young peers, and the transmission of the rare pieces to the young peers, can counter the missing piece syndrome.
A mechanism built into BitTorrent, called tit-for-tat operation,  may alter the peer selection policy enough to yield
stability for any choice of $\lambda, \mu,$ and $U_s.$   Under  tit-for-tat operation, peers upload almost exclusively
to peers from which they can simultaneously download.   An obvious benefit of tit-for-tat is to give peers incentive
to upload, thereby helping other peers, but it also may be effective against the missing piece syndrome.    Specifically,
tit-for-tat encourages one-club peers to reduce their  rate of download to the young peers, because the young peers
have nothing to upload to the one-club members.   This increases the amount of time that peers remain young, giving
them a greater chance to obtain a rare piece  from the fixed seed.   Also, infected peers would preferentially send to
young peers, because often a normal young peer and an infected young peer would be able to help each other.
While it is thus clear that tit-for-tat operation helps combat the missing piece syndrome,  we leave open the problem
of quantifying the effect for a specific model.

\subsection{The Borderline of Stability}
We have shown that, for any $\mu > 0,$  the system is
stable if $\lambda < U_s$ and unstable if $\lambda > U_s,$  
and this result is insensitive to the value of $\mu$ and to the
 piece selection policy, as long as a useful piece is selected whenvever possible.
 While it may not be interesting from a practical point of view,  we comment
on the case  $\lambda = U_s.$  First, we give a precise result
for a limiting case of the original system, and then we offer
a conjecture.

A simpler network model results by taking a limit as $\mu \rightarrow \infty.$  
Call a state {\em slow} if all peers in the system have the same type, which includes
the state such that there are no peers in the system.  Otherwise, call a state {\em fast}. 
The total rate of transition out of any slow state does not depend on $\mu,$ and the
total rate out of any fast state is greater than or equal to $\mu/2.$    For very large values of
$\mu,$  the process spends most of its time in slow states.    The original
Markov process can be transformed into a new one by {\em watching} the
original process while it is in the set of slow states.   This means removing
the portions of each sample path during which the process is in fast states,
and time-shifting the remaining parts of the sample path to leave no
gaps in time.  The limiting Markov process, which we call the
$\mu=\infty$ process, is the weak limit (defined as usual for probability measures
on the space of c\`{a}dl\`{a}g sample paths equipped
with the Skorohod topology)  of the original process
watched in the set of slow states, as $\mu \rightarrow \infty.$
By symmetry of the model, the state space of the $\mu=\infty$ process
can be reduced further, to
$\widehat{\cal S}=\{(0,0)\} \cup \{(n,k): n\geq 1, 1\leq k \leq K-1\},$
where a state $(n,k)$ corresponds to $n$ peers in the system which
all possess the same set of $k$ pieces.  The positive transition rates
of the $\mu=\infty$ process are given by:
$$
\begin{array}{ccc}
transition  &  rate &   condition \\ 
(n,k) \rightarrow (n+1,k)  &  \lambda   & (n,k) \in \widehat{\cal S}  \\
(n,k)  \rightarrow (n,k+1) & U_s & n\geq 1, 0\leq k \leq K-2  \\
(n,K-1)  \rightarrow  (n-1,K-1) & U_s &  n\geq 2, k=K-1  \\
(1,K-1) \rightarrow ~~(0,0)  ~~~~~~~~~&  U_s 
\end{array}
$$
and the transition rate diagram is pictured in Figure \ref{fig.mu_infinity}
for $K=3.$   
\begin{figure}
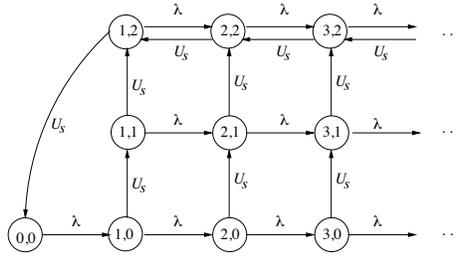

\post{mu_infinity}{6}
\caption{Transition rates for $\mu=\infty$ system for $K=3.$}
\label{fig.mu_infinity}
\end{figure}
 The top layer of states consists of those for which all peers have
$K-1$ pieces.   These states correspond to all peers being in the one club,
or all missing some other piece.  From any state the process reaches the top
layer in mean time less than or equal to $\frac{1}{\lambda} + \frac{K-1}{U_s},$  and within the
top layer  the process behaves like a birth-death process with
birth rate $\lambda$ and death rate $U_s.$   Since such  birth-death processes
are null-recurrent if $\lambda=U_s,$  it follows that {\em the $\mu=\infty$ model
is null-recurrent if $\lambda = U_s.$}

Consider the original process with $\lambda=U_s$ and finite $\mu.$   Suppose
the process is in a state with a very large one club which includes all or nearly
all the peers; let $n$ be the number of peers in the one club. 
New young peers arrive at rate $\lambda$, and they are in the system
for approximately $\frac{1}{\mu}$ time units while they are holding exactly $k$ pieces
for $0 \leq k \leq K-2.$ .  Thus, over the short term,  the mean number of young peers in
the system holding $k$ pieces is near $\frac{\lambda}{\mu}$ for $0 \leq k \leq K-2.$  
The average fraction of peers that are young peers holding $k$ pieces is thus approximately
$\frac{\lambda}{n\mu}$  for $0 \leq k \leq K-2.$  
The average total rate that young peers holding $k$ pieces become infected is dominated by the
rate the fixed seed downloads piece one to them and is thus approximately $\frac{U_s\lambda}{(K-k) n\mu},$  
where the factor $\frac{1}{K-k}$ comes the assumption of uniform random useful
piece selection for downloads from the seed.  A young peer that becomes infected
when it has $k$ pieces will eventually release, on average, about $K-k-1$ other peers from the one club.  
Thus, to a first order approximation, for large $n$, the number of peers in the system
behaves like a birth-death process with arrival rate $\lambda$ and state
dependent departure rate $U_s(1 +  \frac{\mu_o}{n\mu}),$   where
$$
\mu_o=\lambda   \sum_{k=0}^{K-2}  \frac{K-k-1}{K-k}.
$$
The elementary theory of birth-death processes shows that a birth-death process with constant birth
rate $\lambda$ and state-dependent death rate $\lambda(1+\frac{c}{n})$ is positive
recurrent if $c > 1$ and null-recurrent if $0<c \leq 1.$  This strongly suggests the
following to be true:
\begin{conjecture}
If $\lambda=U_s$, the process is positive recurrent if $0 < \mu < \mu_o$
and is null recurrent if $\mu > \mu_o.$
\end{conjecture}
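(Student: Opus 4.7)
The plan is to make the birth-death heuristic of the paragraph preceding the conjecture rigorous through a two-timescale analysis. When $n=|\mathbf{x}|$ is large and most peers sit in a single one-club, the young-peer sub-population equilibrates on an $O(1/\mu)$ timescale while the one-club size $n$ drifts on the far slower $O(n)$ timescale. First I would define \emph{good states}, in which all but $O(\lambda/\mu)$ peers share a common $(K-1)$-subset of pieces missing the same specific piece, and adapt Lemma~\ref{lemma.MG1compare} together with the root-peer branching comparison of Section~\ref{sec:instability} to show that in good states the instantaneous rate of one-club departures equals $\lambda(1+\mu_o/(n\mu))+o(1/n)$ uniformly in the young-peer configuration. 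Standard coupling with the $M/GI/\infty$ queue would establish that good states are reached in $O(1)$ expected time from any initial state, and that the process spends a vanishing fraction of time outside the good set once $n$ is large.

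For positive recurrence when $\mu<\mu_o$, set $c=\mu_o/\mu>1$ and pick $\alpha\in(2,1+c)$; this interval is nonempty precisely because $c>1$. Take $V(\mathbf{x})=|\mathbf{x}|^\alpha$, added to the quadratic function from Section~\ref{sec:stability} to handle bad states. Taylor expansion of the increments of $V$ along arrivals and departures, combined with the effective-rate estimate above, would give, inside the good set,
\[
QV(\mathbf{x}) = \alpha\lambda(\alpha-1-c)|\mathbf{x}|^{\alpha-2} + o(|\mathbf{x}|^{\alpha-2}).
\]
Since $\alpha-1-c<0$ and $\alpha-2>0$, this is bounded above by $-\epsilon|\mathbf{x}|^{\alpha-2}$ for $|\mathbf{x}|$ large, while on bad states the quadratic contribution supplies the $O(-|\mathbf{x}|)$ drift already proved in Section~\ref{sec:stability}. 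Foster--Lyapunov then yields positive recurrence and $\sum_{\mathbf{x}}\pi(\mathbf{x})|\mathbf{x}|<\infty$.

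For null recurrence when $\mu>\mu_o$, so that $c<1$, the argument splits into recurrence and non-positive-recurrence. Recurrence would be obtained with $V(\mathbf{x})=|\mathbf{x}|^\beta$ for any $\beta\in(0,1+c)$: the same Taylor computation gives $QV\le 0$ outside a finite set, and since $V\to\infty$ the irreducible chain is recurrent. To rule out positive recurrence, I would couple the one-club size with a birth--death chain of birth rate $\lambda$ and death rate $\lambda(1+c/n)$; this comparison chain has invariant measure $\pi_n\asymp n^{-c}$ which fails to be summable when $c<1$, so its expected hitting time from $n$ down to $0$ tends to infinity with $n$. Transferring this lower bound through the coupling shows that any finite set has infinite expected return time under the original process, ruling out a stationary probability distribution and yielding null recurrence.

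The main technical obstacle is quantifying the averaging approximation finely enough: the error $o(1/n)$ required in the effective departure rate is at the edge of what naive concentration bounds provide, because a priori the young-peer empirical distribution equilibrates to its quasi-stationary form only up to $O(1/\sqrt{n})$ fluctuations. A clean route is to augment the Lyapunov function as $V(\mathbf{x})=|\mathbf{x}|^\alpha+W(\mathbf{x})$, where $W$ is a bounded corrector solving an approximate Poisson equation for the fast young-peer sub-chain; its drift cancels the leading error contribution to $QV$ coming from non-stationarity of the young peers, leaving only higher-order remainders. Executing this cancellation uniformly, and propagating the birth--death null-recurrence comparison through the full $K$-dimensional state space, constitute the bulk of the technical work; the borderline case $\mu=\mu_o$, where the comparison chain itself sits at the positive/null boundary, stays outside the reach of this approach and is deliberately excluded from the conjecture.
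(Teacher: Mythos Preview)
The paper does not prove this statement; it is explicitly labeled a conjecture, supported only by the heuristic birth--death computation that you paraphrase in your opening paragraph. There is no proof in the paper against which to compare your proposal---your plan is exactly an attempt to make that heuristic rigorous, which the authors do not claim to have done.

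Your outline follows the natural route (average over the fast young-peer dynamics, then apply a polynomial Lyapunov function tuned to the resulting $1/n$ drift), and you have correctly identified the crux: establishing that the effective one-club departure rate equals $\lambda\bigl(1+\mu_o/(n\mu)\bigr)+o(1/n)$. Two concrete gaps deserve mention. First, the tools you cite from Section~\ref{sec:instability}---Lemma~\ref{lemma.MG1compare} and the root-peer branching comparison---are one-sided: they furnish only \emph{upper} bounds on the young-peer population and on uploads of piece one by infected peers. For the conjecture you need matching \emph{lower} bounds sharp to $o(1/n)$ in order to pin down the $\mu_o/(n\mu)$ coefficient on both sides; those are not in the paper and would have to be built from scratch, and the Poisson-equation corrector you sketch does not by itself supply them. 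Second, your claim that Foster--Lyapunov with $V=|\mathbf{x}|^\alpha$ yields $\sum_{\mathbf{x}}\pi(\mathbf{x})|\mathbf{x}|<\infty$ overreaches: the drift bound $QV\le -\epsilon|\mathbf{x}|^{\alpha-2}$ delivers only $\sum_{\mathbf{x}}\pi(\mathbf{x})|\mathbf{x}|^{\alpha-2}<\infty$, which is weaker than a first-moment bound unless $\alpha\ge 3$, i.e., unless $c\ge 2$. Since the conjecture asks only for positive recurrence, this over-claim is harmless to the main goal, but it should be dropped.
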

We also expect similar results to be true for other piece selection policies, but the value of
$\mu_o$ would depend on the piece selection policy.

\section{Appendix}

\subsection{Stochastic comparison}

A continuous-time random process is said to be {\em c\`{a}dl\`{a}g} if, with the possible exception of a set
of probability zero, the sample paths of the process are right continuous and have finite left limits.

\begin{definition}
Suppose $A=(A_t : t\geq 0)$ and  $B=(B_t : t\geq 0)$  are two random processes, either both
discrete-time random processes, or both continuous time,  c\`{a}dl\`{a}g random processes.
Then $A$ is {\em stochastically dominated} by $B$ if there is a single probability
space $(\Omega, {\cal F}, P)$, and two random processes $\tilde{A}$ and $\tilde{B}$ on
 $(\Omega, {\cal F}, P)$, such that\\
 \begin{description}
 \item (a)  $A$ and $\tilde{A}$ have the same finite dimensional  distributions,   
 \item (b)   $B$ and $\tilde{B}$ have the same finite dimensional  distributions, and
 \item (c)  $P\{\tilde{A}_t \leq \tilde{B}_t~\mbox{for all}~t\}=1.$
 \end{description}
 \end{definition}
Clearly if $A$ is stochastically dominated by $B$, then for any $a$ and $t$, 
  $P\{ A_t \geq a\}  \leq P\{ B_t \geq a\}.$

\subsection{Appendix: Kingman's Moment bound for SII processes}
Let $(X_t :  t\geq 0)$ be a random process with stationary, independent increments with $X_0=0.$
Suppose the sample paths are c\`{a}dl\`{a}g (i.e.  right-continuous with finite left limits).    Suppose $E[X_1^2]$ is
finite, so there are finite constants $\mu$ and $\sigma^2$ such that
 $E[X_t]= \mu t$ and $\mbox{Var}(X_t)= \sigma^2 t$ for all $t\geq 0.$   Let $X^* =\sup_{t\geq 0}  X_t.$

\begin{lemma} (Kingman's moment bound \cite{Kingman62} extended to continuous time)  \label{lemma:Kingman}
Suppose that $\mu < 0.$   Then $E[X^*] \leq   \frac{\sigma^2}{-2\mu}.$
Also, for any $B>0$, $P\{ X^* \geq B \} \leq \frac{\sigma^2}{-2\mu B}.$
\end{lemma}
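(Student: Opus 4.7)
The plan is to reduce the continuous-time statement to the classical discrete-time Kingman bound for a random walk, applied on a sampling grid of spacing $h$, and then send $h\to 0$ using right continuity of the paths; the tail bound then follows immediately by Markov's inequality.

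For the discrete step, fix $h>0$ and set $S_n = X_{nh}$, so that $(S_n)$ is a random walk with i.i.d.\ increments $\xi_i$ distributed as $X_h$, satisfying $E[X_h] = h\mu < 0$ and $E[X_h^2] = h\sigma^2 + h^2\mu^2$. Writing $M_N = \max(S_0, S_1, \ldots, S_N)$, a standard pathwise computation gives the Lindley-type identity $M_{N+1} = (\xi_1 + M_N')^+$, where $M_N'$ is a function of $\xi_2,\ldots,\xi_{N+1}$, independent of $\xi_1$, and equal in distribution to $M_N$. Squaring and using $((x)^+)^2 \leq x^2$ then yields
\[
E[M_{N+1}^2] - E[M_N^2] \;\leq\; 2h\mu\, E[M_N] + h\sigma^2 + h^2\mu^2,
\]
and all second moments are finite because $M_N \leq |\xi_1| + \cdots + |\xi_N|$. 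The left side is nonnegative by monotonicity of $M_N$, so rearranging yields $E[M_N] \leq (\sigma^2 + h\mu^2)/(-2\mu)$ uniformly in $N$, and monotone convergence gives $E[\sup_{n\geq 0} X_{nh}] \leq (\sigma^2 + h\mu^2)/(-2\mu)$.

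To lift this to $E[X^*]$, I would sample on the nested dyadic grids $D_k = 2^{-k}\,\mathbb{Z}_+$; these are monotone increasing in $k$ with dense union, so right continuity of the paths implies $\sup_{t\in D_k} X_t \uparrow X^*$ almost surely. A second monotone convergence step yields $E[X^*] \leq \lim_{k\to\infty}(\sigma^2 + 2^{-k}\mu^2)/(-2\mu) = \sigma^2/(-2\mu)$, and the tail estimate follows from $P\{X^* \geq B\} \leq E[X^*]/B$. The main subtlety is ensuring that refining the grid does not enlarge the constant: the discrete bound depends on $h$ only through the lower-order term $h\mu^2/(-2\mu)$, which vanishes as $h\to 0$, while the c\`{a}dl\`{a}g property guarantees that the dense-grid supremum coincides with the overall supremum almost surely.
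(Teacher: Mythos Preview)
Your proof is correct and follows essentially the same approach as the paper: discretize on nested dyadic grids, apply the discrete-time Kingman bound to the sampled random walk, and pass to the limit by monotone convergence (using the c\`{a}dl\`{a}g property), then finish with Markov's inequality. The one small difference is that the paper simply cites Kingman's discrete bound in its variance form, $E[\sup_k S^n_k]\leq \mbox{Var}(S^n_1)/(-2E[S^n_1])=\sigma^2/(-2\mu)$, which already gives the sharp constant at every grid level; you instead re-derive the bound via the Lindley identity and obtain $E[\xi^2]/(-2E[\xi])=(\sigma^2+h\mu^2)/(-2\mu)$, so you genuinely need the $h\to 0$ limit to kill the extra $h\mu^2/(-2\mu)$ term. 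Either way the argument goes through.
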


\begin{proof}
For each integer $n\geq 0$, let $S^n$ denote the random walk process $S^n_k=X_{k2^{-n}}.$
Let $ S^{n*} =\sup_{k \geq 0} S_k.$    By Kingman's moment bound for discrete time
processes,
$$
E[S^{n*}] \leq \frac{\mbox{Var}(S_1^n)}{-2E[S_1^n] }  = \frac{\sigma^2}{-2\mu}
$$
Since  $ S^{n*}$ is nondecreasing in $n$ and converges a.s. to $X^*,$   the first conclusion of
the lemma follows.  The second conclusion follows from the first by Markov's inequality.
\end{proof}

\begin{corollary}  \label{cor:compoundKingman}
Let $C$ be a compound Poisson process with $C_0=0$, with jump times given by a Poisson process
of rate $\alpha$, and jump sizes having mean $m_1$ and mean square value $m_2.$
Then for all $B >0$ and $\epsilon > \alpha m_1$
\begin{equation}   \label{eq:compoundKingman}
P\{ C_t   < B+\epsilon t ~ \mbox{for all}~ t \} \geq 1 -   \frac{\alpha m_2}{2B(\epsilon- \alpha m_1)}   
\end{equation}
\end{corollary}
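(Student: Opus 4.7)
The plan is to reduce the corollary to Lemma \ref{lemma:Kingman} by centering the compound Poisson process. Define $X_t = C_t - \epsilon t$. Because the jumps of $C$ occur at the epochs of a Poisson process of rate $\alpha$ and have i.i.d.\ sizes, $X$ has stationary, independent increments, starts at zero, and has c\`{a}dl\`{a}g sample paths, so $X$ fits directly into the framework of Lemma \ref{lemma:Kingman}.

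Next I would identify the drift and variance rate of $X$. By the standard formulas for compound Poisson processes, $E[C_1] = \alpha m_1$ and $\mbox{Var}(C_1) = \alpha m_2$ (the second moment, not the variance, of the jump sizes appears here because the variance decomposes as a sum of $N_1$ independent squared jumps where $N_1$ is Poisson of mean $\alpha$). Subtracting the deterministic drift $\epsilon t$ yields $\mu := E[X_1] = \alpha m_1 - \epsilon$, which is strictly negative by the hypothesis $\epsilon > \alpha m_1$, and $\sigma^2 := \mbox{Var}(X_1) = \alpha m_2$.

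Finally I would translate between the all-time supremum $X^* = \sup_{t \geq 0} X_t$ and the event appearing in \eqref{eq:compoundKingman}. The failure event $\{C_t \geq B + \epsilon t \text{ for some } t \geq 0\}$ is identical to $\{X^* \geq B\}$, so the second conclusion of Lemma \ref{lemma:Kingman} applied to $X$ gives $P\{X^* \geq B\} \leq \sigma^2/(-2\mu B) = \alpha m_2 / (2B(\epsilon - \alpha m_1))$, and taking complements yields \eqref{eq:compoundKingman}. There is essentially no obstacle to speak of: the corollary is a direct specialization of the preceding lemma. The only small points requiring care are verifying the variance identity for the compound Poisson process and noting that the supremum event uses a non-strict inequality (matching the formulation of Lemma \ref{lemma:Kingman}), both of which are routine.
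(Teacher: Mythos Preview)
Your proposal is correct and follows essentially the same approach as the paper: define $X_t = C_t - \epsilon t$, verify that $X$ satisfies the hypotheses of Lemma~\ref{lemma:Kingman} with $\mu = \alpha m_1 - \epsilon < 0$ and $\sigma^2 = \alpha m_2$, and then apply the second conclusion of that lemma with the given $B$ to bound the complementary event. The paper's proof is just a two-line version of exactly this argument.
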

\begin{proof}
Let $X_t=C_t-\epsilon t.$   Then $X$ satisfies the hypotheses of Lemma \ref{lemma:Kingman}
with $\mu= \alpha m_1-\epsilon$  and $\sigma^2=\alpha m_2.$  
So $P\{ X^* \geq B \} \leq \frac{ \alpha m_2}{-2(\alpha m_1 -\epsilon)B},$ which implies \eqref{eq:compoundKingman}.
\end{proof}

\subsection{A maximal bound for an $\mathbf M/GI/\infty$ queue}

 \begin{lemma}   \label{lemma.mginfty}
 Let $M$ denote the number of customers in an $M/GI/\infty$ queueing system, with arrival rate
 $\lambda$ and mean service time $m.$    Suppose that $M_0=0.$  Then for $B,\epsilon > 0$,
 \begin{equation}  \label{eq.mginfty}
 P\{ M_t \geq B+\epsilon t ~~\mbox{for some}~t\geq 0\}  \leq   \frac{ e^{\lambda(m+1) } 2^{-B} }{1-2^{-\epsilon}}
 \end{equation}
 \end{lemma}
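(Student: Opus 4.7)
The plan is to discretize time and apply a union bound combined with a Chernoff-type estimate using base $2$. First I would establish a pathwise domination: for each integer $n \geq 0$ and each $t \in [n,n+1]$,
\[
M_t \leq M_n + (A_{n+1}-A_n),
\]
where $A_t$ denotes the cumulative arrival count at time $t$. This is immediate because on $[n,t]$ only new arrivals can increase $M$, and the number of such arrivals is bounded by $\Delta A_n := A_{n+1}-A_n$. Consequently,
\[
\{M_t \geq B + \epsilon t \text{ for some } t\geq 0\} \subseteq \bigcup_{n=0}^{\infty}\{M_n + \Delta A_n \geq B + \epsilon n\},
\]
reducing a supremum over continuous time to a countable union.

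Second, I would exploit the Poisson structure. Since $M_0 = 0$, standard Poisson thinning (each arrival in $[0,n]$ is still in service at time $n$ independently, with probability $P\{S > n - \text{arrival time}\}$) shows that $M_n$ is Poisson with mean $\lambda \int_0^n P\{S > s\}\,ds \leq \lambda m$, where $S$ is a generic service time. The increment $\Delta A_n$ is Poisson with mean $\lambda$, and it is independent of $M_n$ because it depends only on the arrival process restricted to $[n,n+1]$, whereas $M_n$ depends only on arrivals during $[0,n]$ together with their (independent) service times. Hence $M_n + \Delta A_n$ is Poisson with mean at most $\lambda(m+1)$, so using the moment generating function identity $E[2^X] = e^{\mu}$ for $X \sim \mathrm{Poisson}(\mu)$,
\[
E\!\left[2^{M_n + \Delta A_n}\right] \leq e^{\lambda(m+1)}.
\]

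Third, Markov's inequality applied to the random variable $2^{M_n + \Delta A_n}$ gives
\[
P\{M_n + \Delta A_n \geq B + \epsilon n\} \leq 2^{-(B+\epsilon n)} E\!\left[2^{M_n + \Delta A_n}\right] \leq e^{\lambda(m+1)}\, 2^{-B}\, 2^{-\epsilon n}.
\]
Summing the geometric series over $n \geq 0$ yields exactly $e^{\lambda(m+1)} 2^{-B}/(1-2^{-\epsilon})$, which is the claimed bound. The main obstacle is the very first step: because $M_t$ is not monotone, one cannot directly apply a union bound over $t$; the key trick is to discretize and absorb the within-interval fluctuations of $M_t$ into the independent Poisson increment $\Delta A_n$, after which the explicit Laplace-transform computation for Poissons and geometric summation do the rest.
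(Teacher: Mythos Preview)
Your proof is correct and is essentially the same as the paper's: both discretize time at the integers, bound $M_t$ on each unit interval by a Poisson variable with mean at most $\lambda(m+1)$, apply the Chernoff bound with $\theta=\ln 2$, and sum the resulting geometric series. The paper packages the bounding variable as $M^\sharp_i$, the occupancy at integer time $i$ of an auxiliary $M/GI/\infty$ system in which every customer lingers one extra time unit; your bounding variable $M_n+\Delta A_n$ (with $n=i-1$) is in fact exactly $M^\sharp_i$, since a customer is in the auxiliary system at time $n+1$ if and only if it either was in the original system at time $n$ or it arrived during $(n,n+1]$.
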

 \begin{proof}
Our idea is to find another $M/GI/\infty$ system whose number of customers sampled at integer times can be used to bound $M.$
Suppose we let every customer for the original process stay in the system for one extra unit time after they have been served.
Let $M^{\sharp}_t$  be the number of customers in this new $M/GI/\infty$ system at time $t$.
Note that $M^{\sharp}$ is also the number in an $M/GI/\infty$ system, with arrival rate $\lambda$ and mean service time $m+1.$
By a well-known property of $M/GI/\infty$ systems,  for any time $t$, $M^{\sharp}_t$ is a Poisson random variable.  Since
the initial state is zero, the mean number in the system at any time $t$ is less than  $\lambda (m+1),$  which is the mean number
in the system in equilibrium.   If $Poi(\mu)$ represents a Poisson random variable with mean $\mu$, then the Chernoff inequality
yields $P\{ Poi(\mu) \geq a \} \leq \exp(\mu(e^\theta -1 ) - \theta a ), $, and taking $\theta=\ln 2$ yields
$P\{ Poi(\mu) \geq a \} \leq e^{\mu} 2^{-a}.$
For any integer $i\geq 1$,  if $t\in (i-1],$  then $M_t \leq M^{\sharp}(i).$   Therefore,
\begin{eqnarray*}
\lefteqn{ P\{ M_t \geq B+\epsilon t ~~\mbox{for some}~t\geq 0\}} \\
&  \leq  & \sum_{i=1}^\infty 
 P\{ M_t \geq B+\epsilon t ~~\mbox{for some}~t \in (i-1,i]  \}  \\
&  \leq  & \sum_{i=1}^\infty 
 P\{ M^\sharp_i \geq B+\epsilon (i-1) \}   \\
&  \leq &   \sum_{i=1}^\infty   e^{\lambda(m+1)}  2^{-(B+\epsilon (i-1)) }\\
& = &  \frac{ e^{\lambda(m+1) } 2^{-B} }{1-2^{-\epsilon}}
\end{eqnarray*}
 
\end{proof}

\subsection{On Busy Periods for M/GI/1 Queues} 

Consider an $M/GI/1$ queue with arrival rate $\lambda.$   
Let $N$ denote the number of
customers served in a busy period, let  $L$ denote the length
of a busy period, and let $X$ denote the service
time of a typical customer.
\begin{lemma}  Let $\rho=\lambda E[X].$  It $\rho < 1$ then
\begin{eqnarray}
E[N]=\frac{1}{1-\rho} ~~~~~~~~ E[N^2]=\frac{1+\lambda^2 \mbox{Var}(X)}{(1-\rho)^3} \label{eq.N} \\
E[L]=\frac{E[X]}{1-\rho}  ~~~  E[L^2] = \frac{E[X^2]}{(1-\rho)^3} ~~~~ \label{eq.L} \\
\mbox{Cov}(N,L) = \frac{\lambda E[X^2]}{(1-\rho)^3}~~~~~~~~~~~~~~~
\end{eqnarray}
\end{lemma}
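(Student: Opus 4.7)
The plan is to exploit the classical branching decomposition of the busy period. Let $X_1$ be the service time of the initiating customer of the busy period. Conditional on $X_1$, the number of customers $A$ arriving during this initial service is Poisson with mean $\lambda X_1$. Using the fact that service order within a busy period can be permuted without affecting $N$ or $L$, we view each of these $A$ arrivals as starting an independent sub-busy period distributed exactly as the original. This yields the fundamental recursions
\begin{equation*}
N \;=\; 1+\sum_{i=1}^{A} N_i,  \qquad L \;=\; X_1 + \sum_{i=1}^{A} L_i,
\end{equation*}
where $(N_i,L_i)$ are i.i.d.\ copies of $(N,L)$, independent of $(X_1,A)$. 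I would first verify carefully that such a representation exists (it's enough to check it on sample paths by relabeling customers to be served last-come-first-served).

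From here, the means come immediately by taking expectations: $E[L] = E[X] + \lambda E[X]\,E[L]$ and $E[N] = 1 + \lambda E[X]\,E[N]$, and solving gives $E[L]=E[X]/(1-\rho)$ and $E[N]=1/(1-\rho)$, both legitimate (finite) because $\rho<1$. To reach the second moments I square the recursions and condition on $X_1$. The only non-routine ingredient is the compound-Poisson second moment identity: for $A\mid X_1\sim \mathrm{Poisson}(\lambda X_1)$ and i.i.d.\ $Y_i$ independent of $(X_1,A)$,
\begin{equation*}
E\Bigl[\bigl(\textstyle\sum_{i=1}^{A} Y_i\bigr)^2 \,\big|\, X_1 \Bigr] \;=\; \lambda X_1\, E[Y^2] + (\lambda X_1)^2 (E[Y])^2,
\end{equation*}
which uses $\mathrm{Var}(A\mid X_1)=E[A\mid X_1]=\lambda X_1$. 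Plugging into $E[L^2\mid X_1]$ and then taking $E$ produces an affine equation in $E[L^2]$ whose solution is $E[X^2]/(1-\rho)^3$; the analogous computation for $N^2$ produces $E[N^2]=(1+\lambda^2\mathrm{Var}(X))/(1-\rho)^3$.

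For the covariance I multiply the two recursions and condition on $X_1$, using that the cross-sum $\bigl(\sum_{i=1}^A N_i\bigr)\bigl(\sum_{j=1}^A L_j\bigr)$ splits into a diagonal part contributing $A\,E[NL]$ and an off-diagonal part contributing $A(A-1)\,E[N]E[L]$, together with $E[A(A-1)\mid X_1]=(\lambda X_1)^2$ for the Poisson. This again gives an affine equation in $E[NL]$ whose solution, after subtracting $E[N]E[L]$, collapses to $\mathrm{Cov}(N,L)=\lambda E[X^2]/(1-\rho)^3$.

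The main obstacle is not any single computation but rather justifying the distributional recursion $N=1+\sum N_i$, $L=X_1+\sum L_i$ in a clean way, since the obvious FCFS sample paths do not directly display this branching structure; reindexing to LCFS (or equivalently invoking that sub-busy periods generated by each arrival during $X_1$ are i.i.d.\ and independent of $X_1$ by the strong Markov property of the Poisson arrival process) is the conceptual step. Once that is in hand, the rest is bookkeeping with the compound-Poisson moment formulas, and all four moment equations are linear in the unknowns and straightforward to solve because $\rho<1$.
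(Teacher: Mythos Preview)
Your proposal is correct and follows essentially the same route as the paper: both use the branching-process decomposition $N=1+\sum_{i=1}^{A}N_i$, $L=X_1+\sum_{i=1}^{A}L_i$ with $A\mid X_1\sim\mathrm{Poisson}(\lambda X_1)$ and i.i.d.\ copies $(N_i,L_i)$ of $(N,L)$, then extract the moments by conditioning on $X_1$ (the paper phrases the first step as Wald's identity, but this is the same computation). Your remark about justifying the recursion via an LCFS relabeling is a nice addition that the paper leaves implicit.
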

The lemma can be proved by the well-known branching process
method. Let $X$ denote the service time of a customer starting a new
busy period.  Let $Y$ denote the number of arrivals while the first customer is
being served.  Then, given $X=x$, the conditional distribution
of $Y$ is Poisson with mean $\lambda x.$   View any customer
in the busy period that arrives after the first customer, to be the
offspring of the customer in the server at the time of  arrival.
This gives the well known representation for $N$ and $L$:
\begin{eqnarray*}
N & = & 1 + \sum_{i=1}^Y   N_i  \\
L & = & X+ \sum_{i=1}^Y L_i
\end{eqnarray*}
where $(N_i,L_i), i\geq 1$ is a sequence of independent random
2-vectors such that for each $i$, $(N_i,L_i)$ has the same
distribution as $(N,L).$   Using Wald's identity, these
equations can be used to prove the lemma.

\subsection{Foster-Lyapunov stability criterion}

\begin{prop}   \label{cor.FosterCompContinuous}
{\em Combined Foster-Lyapunov stability criterion and moment bound--continuous time} (See  \cite{Hajek567,MeynTweedie09}.)
Suppose $X$ is a continuous-time, irreducible Markov process on a countable state space ${\cal S}$ with generator matrix $Q.$
Suppose $V$, $f$, and $g$ are nonnegative functions on $\cal S$ such that
$QV(\mathbf{x})   \leq -f(\mathbf{x}) +g(\mathbf{x})$ for all $\mathbf{x}\in {\cal S}$, and, for some $\delta > 0$,  the set $C$ defined by
$C=\{ \mathbf{x} : f(\mathbf{x}) < g(\mathbf{x})+\delta\}$ is finite.  Suppose also that $\{ \mathbf{x} : V(\mathbf{x}) \leq K\}$ is finite for all $K$.
Then $X$ is positive recurrent and, if $\pi$ denotes the equilibrium distribution,   $\sum_\mathbf{x}  f(\mathbf{x})\pi(\mathbf{x})
 \leq \sum_\mathbf{x}  g(\mathbf{x})\pi(\mathbf{x})$.
\end{prop}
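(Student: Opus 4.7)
The plan is to split the argument into two pieces: positive recurrence first, and then the moment bound.

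For positive recurrence, I would translate the hypothesis into the classical Foster drift condition. Off the set $C = \{\mathbf{x} : f(\mathbf{x}) < g(\mathbf{x}) + \delta\}$ we have $QV(\mathbf{x}) \leq g(\mathbf{x}) - f(\mathbf{x}) \leq -\delta$, and $C$ is finite by hypothesis. Applying Dynkin's formula, suitably localized, to the hitting time $\tau_C = \inf\{t \geq 0 : X_t \in C\}$, the process $V(X_{t \wedge \tau_C}) + \delta\,(t \wedge \tau_C)$ is a nonnegative supermartingale started from $V(\mathbf{x})$, which yields $E_\mathbf{x}[\tau_C] \leq V(\mathbf{x})/\delta < \infty$ for every $\mathbf{x} \in \mathcal{S}$. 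The assumption that each sublevel set $\{V \leq K\}$ is finite is what makes this localization rigorous: on $\{V \leq K\}$ the generator $Q$ has bounded row sums, so the standard optional-stopping argument applies. Since $C$ is finite and the chain is irreducible, a solidarity argument shows that some fixed state $\mathbf{x}^\star \in C$ has finite mean return time, which is precisely positive recurrence.

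For the moment bound, I would work in equilibrium. If $V$ were $\pi$-integrable, stationarity $\pi Q V = 0$ combined with $QV \leq g - f$ would immediately give $\sum_{\mathbf{x}} \pi(\mathbf{x}) f(\mathbf{x}) \leq \sum_{\mathbf{x}} \pi(\mathbf{x}) g(\mathbf{x})$. Since $V$ need not be $\pi$-integrable, I would truncate with an exhausting sequence of finite sets $S_n \uparrow \mathcal{S}$ and their exit times $\tau_n = \inf\{t : X_t \notin S_n\}$. Starting from the stationary distribution $\pi$, Dynkin's formula gives
\[
E_\pi\bigl[V(X_{t\wedge\tau_n})\bigr] - E_\pi\bigl[V(X_0)\mathbbm{1}_{\{X_0 \in S_n\}}\bigr] \leq E_\pi\!\left[\int_0^{t\wedge\tau_n}\!\bigl(g(X_s)-f(X_s)\bigr)\,ds\right].
\]
Dropping the nonnegative left-hand term, dividing by $t$ and letting $t \to \infty$ (ergodic theorem for positive recurrent chains), then sending $n \to \infty$ (Fatou's lemma on the $f$-integral and monotone convergence on the $g$-integral) delivers the desired inequality.

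The main obstacle is the second step: justifying the interchange of limits, integrals, and summations against $\pi$ when $V$ may fail to be $\pi$-integrable. The hypothesis that each sublevel set of $V$ is finite is the structural ingredient that makes the truncation work, since it guarantees $\tau_n \to \infty$ almost surely and permits the drift identity to be localized cleanly. A slicker alternative I would consider is the pathwise $f$-modulated drift identity packaged in Hajek's notes and in Meyn--Tweedie (Theorem 14.3.7), which subsumes both steps at once; but the explicit truncation argument is easier to write out in full and keeps the dependence on the stated hypotheses completely transparent.
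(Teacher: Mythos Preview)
The paper does not actually prove this proposition: it is stated in the appendix as a known result, with citations to Hajek's lecture notes and to Meyn--Tweedie, and no proof is given. So there is no ``paper's own proof'' to compare against.

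Your sketch is the standard route taken in those references: reduce to the classical Foster drift condition $QV \leq -\delta$ off a finite set to get positive recurrence via a localized Dynkin/optional-stopping argument, then obtain the moment bound by integrating the drift inequality against the stationary distribution, with a truncation to handle the possibility that $V \notin L^1(\pi)$. One small point of wording: when you say ``dropping the nonnegative left-hand term,'' what you actually need is to drop $E_\pi[V(X_{t\wedge\tau_n})] \geq 0$ after rearranging, so that the $f$-integral is bounded above by the $g$-integral plus $E_\pi[V(X_0)\mathbbm{1}_{\{X_0\in S_n\}}]$; the latter is finite because $S_n$ is finite, and vanishes after dividing by $t$ and sending $t\to\infty$. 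With that clarification the outline is sound and matches what one finds in the cited sources.
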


%\bibliographystyle{acmtrans-ims}
%\bibliography{/Users/brucehajek/Documents/Papers/BIBS/P2P}
%\bibliographystyle{acmtrans-ims}
%\bibliography{PODCFinal_bib}
%\end{document}

\end{document}